\newcommand{\R}{\mathbb{R}}
\newtheorem{theorem}{Theorem}
\newtheorem{lemma}[theorem]{Lemma}
\newtheorem{assumption}{Assumption}
\begin{document}

\title{Convex Computation of the Basin of Stability to Measure the Likelihood of Falling: A Case Study on the Sit-to-Stand Task}

\author{Victor Shia, Talia Moore, Ruzena Bajcsy, Ram Vasudevan}

\author{
    \IEEEauthorblockN{Victor Shia\IEEEauthorrefmark{1}, Talia Moore\IEEEauthorrefmark{2}, Ruzena Bajcsy\IEEEauthorrefmark{1}, Ram Vasudevan\IEEEauthorrefmark{3}}
    \IEEEauthorblockA{\IEEEauthorrefmark{1}Electrical Engineering and Computer Sciences \\
    University of California, Berkeley \\
    \{vshia, bajcsy\}@berkeley.edu}
    \IEEEauthorblockA{\IEEEauthorrefmark{2}
    Organismic and Evolutionary Biology \\
    Harvard University \\
    taliaym@gmail.com}
    \IEEEauthorblockA{\IEEEauthorrefmark{2}Mechanical Engineering \\
    University of Michigan
    \\ramv@umich.edu}
}

\maketitle
\begin{abstract}
Locomotion in the real world involves unexpected perturbations, and therefore requires strategies to maintain stability to successfully execute desired behaviours. 
Ensuring the safety of locomoting systems therefore necessitates a quantitative metric for stability. 
Due to the difficulty of determining the set of perturbations that induce failure, researchers have used a variety of features as a proxy to describe stability.  
This paper utilises recent advances in dynamical systems theory to develop a personalised, automated framework to compute the set of perturbations from which a system can avoid failure, which is known as the basin of stability.
The approach tracks human motion to synthesise a control input that is analysed to measure the basin of stability.
The utility of this analysis is verified on a Sit-to-Stand task performed by $15$ individuals. 
The experiment illustrates that the computed basin of stability for each individual can successfully differentiate between less and more stable Sit-to-Stand strategies.
\end{abstract}

\begin{IEEEkeywords} 
Stability analysis, Locomotion Biomechanics, Optimization and Optimal Control, Sit-to-Stand
\end{IEEEkeywords}


\section{Introduction}\label{sec:introduction}

Falls are a leading cause of accidental injury and death throughout much of the world.
Due to the aging population and the outsized impact falling has on the elderly, the cost associated with falls is expected to rise dramatically in the next twenty years \cite{CDC2015}.
Directed therapeutic care can significantly reduce the risk of falling~\cite{Robertson2001,Horak2006}; however, the resources available for such treatment are limited.
An automated test identifying individuals at risk for falling can make targeted deployment of therapeutic care feasible.
Unfortunately the construction of such a test has been challenging.

This paper develops a personalised automated diagnostic test that uses kinematic observations to measure an individual's likelihood of falling. The approach, which is grounded in dynamical systems theory, computes the \emph{Basin of Stability} (BOS) of a locomotor pattern, or the set of perturbations that do not lead to a fall under an individual's chosen locomotor strategy (illustrated in Figure \ref{fig:example_bos}).
Informally, an individual that is able to tolerate a larger set of perturbations has a larger BOS and is less likely to fall. 

In fact, measuring the BOS is a direct way to characterise the likelihood of falling, since it identifies the specific deficiencies that lead to failure~\cite{Pollock2000}. 
Unfortunately the computation of this individual- and behaviour-specific BOS is challenging, since it requires measuring the effect of arbitrary perturbations to a nonlinear system. 
An empirical experiment would require exhaustive perturbation of an individual throughout a locomotor pattern, which is practically infeasible and dangerous.

To address these issues, the presented approach computes the BOS in a tractable manner using convex optimization.
Though the method is applicable to arbitrary locomotor patterns, this paper illustrates the utility of this technique by analysing Sit-to-Stand (STS) manoeuvres, 
STS manoeuvres are less complex than other locomotor patterns (e.g. walking, running, climbing, lifting), simplifying the validation of the method.
Although comparatively straightforward, the ability to stand is a prerequisite for bathing, cooking, dressing, maintaining hygiene, and walking.  
As a result, difficulty in performing STS manoeuvres is considered a primary risk factor for falls amongst the elderly~\cite{Campbell1989}.

\begin{figure}
  \begin{subfigure}[b]{0.5\textwidth}
    \centering
    \includegraphics[width=\textwidth,clip,trim=1cm 8.5cm 1.5cm 1cm]{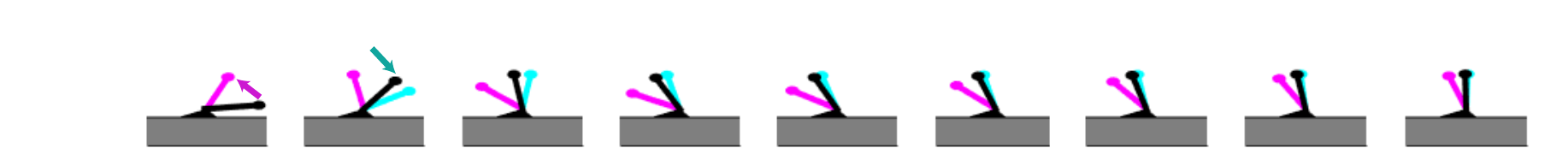}
    \caption{Pictorial Model}
    \label{fig:example_bos_a}
  \end{subfigure}
  \centering
  \begin{subfigure}[b]{0.5\textwidth}
    \centering
    \includegraphics[width=\textwidth,clip,trim=1cm 1cm 1.5cm 3cm]{images/example_ipm}
    \caption{Basin of Stability over Time}
    \label{fig:example_bos_b}
  \end{subfigure}
  \caption{The BOS of an inverted pendulum swing-up controller with perturbations indicated by arrows.  (\subref{fig:example_bos_a}) illustrates a nominal (black) and perturbed (magenta and cyan) swing-up trajectory for the inverted pendulum. (\subref{fig:example_bos_b}) illustrates the BOS (gray) of the nominal controller, the nominal trajectory (black), and the perturbed trajectory (magenta and cyan) in the configuration space of the inverted pendulum\protect\footnotemark. Despite perturbation, the magenta and cyan trajectories arrive at the upright configuration because they remain within the basin of stability of the nominal trajectory. 
    }
  \label{fig:example_bos}
\end{figure}

\footnotetext{Note that at time $T$, the BOS of the controller is small because we are concerned about a finite-time BOS.}

\subsection{Existing Stability Metrics}

Due to the importance of STS manoeuvres in maintaining quality of life, and the impossibility of testing all possible perturbations, a variety of methods to characterise an individual's likelihood of falling while performing STS have been proposed.  These methods are summarised in Table \ref{tab:sts_metrics}.

\begin{table*}
  \begin{tabular}{|c | p{11.5cm} | p{1.5cm} |}
  \toprule
    Methods & Summary & References \\ \hline
    BERG Balance Test & A battery of functional tests with a single number that determines the likelihood of falling & \cite{Berg1989} \\ \hline
    Stops Walking When Talking & Relates the amount of attention a person requires to perform an action with a likelihood of falling & \cite{Lundin-Olsson1997} \\ \hline
    Timed Up \& Go & Correlates a likelihood of falling with the the amount of time it takes to stand up & \cite{Podsiadlo1991,Lundin-Olsson1998} \\ \hline
    Model based methods & Uses a single inverted pendulum to determine the set of feasible initial positions or positions and velocities that can lead to standing up &  \cite{Pai1997,Papa1999,Patton1999,Papa2000,Pai2003,Fujimoto2013} \\
    \bottomrule
  \end{tabular}
  \caption{Various STS Stability Methods}
  \label{tab:sts_metrics}
\end{table*}

These methods generally summarise STS motions using a single feature and perform versions of regression analysis to estimate a patient's stability.
In doing so, they forfeit the ability to characterise the specific deficiencies limiting an individual's STS ability.
More troublingly, according to several studies, the ability of these clinical tests to distinguish between stable and unstable patients is unclear~\cite{Lundin-Olsson1997,BogleThorbahn1996,Steffen2002,Muir2008a,Shumway-Cook2000}. 

\subsection{Dynamical Systems Perspective}

Computing the BOS is a fundamental objective of the dynamical systems community since it can be used to verify the satisfactory operation of a system despite perturbations. 
In particular, engineers have long sought to understand the behavior of a dynamical system after arbitrary perturbation. 
Although it is possible to directly simulate arbitrary configurations (analogous to exhaustive perturbation), this technique provides only limited insight. 
To address this issue, the dynamical systems community has studied numerical methods to compute the BOS.

These methods for nonlinear systems include Lyapunov-based techniques~\cite{Prajna2004b,Topcu2007} and Hamilton-Jacobi based methods~\cite{Mitchell2005}.
Lyapunov-based methods~\cite{Lyapunov1992} search for functions whose sub-level sets satisfy certain criteria. 
The construction of such a Lyapunov function is possible for polynomial dynamical systems using semidefinite programming~\cite{Parrilo2000}, but requires solving a challenging bilinear optimization problem, limiting potential applicability.
Hamilton-Jacobi based methods discretise the domain and run variants of dynamic programming on a discretised nonlinear partial differential equation to determine the set of states that belong to a BOS.
Though this method is able to tractably compute the BOS for dynamical systems with special structure~\cite{maidens2013lagrangian}, it is only able to accurately compute the BOS for general systems with less than $4$ states. 
Recently, the authors developed a method to analytically compute the BOS for polynomial dynamical systems based on occupation measures. This method, which relies on convex optimization and is described in further detail below, tractably outer approximates the BOS of a system without relying upon exhaustive perturbative experiments or simulation. Furthermore, this method successfully synthesises safe robotic motion for systems with up to $8$ states~\cite{Henrion2013,Majumdar2014,Shia2014b,Mohan2016}.  

\subsection{Contribution}

First, Section \ref{sec:method} describes a personalised computational framework to model, identify, and analyse the unique stability of an individual's motion. 
Second, Sections \ref{sec:experiments} and \ref{sec:results} describe a motion capture dataset of humans performing various STS strategies. The proposed methods then evaluate each individual's kinematic stability.
Section \ref{sec:discussion} summarises the impact of the proposed method and describes potential extensions.
\section{Methodology} \label{sec:method}

This section presents the framework to compute the BOS of an individual's locomotor pattern given kinematic data. 
The approach is summarised informally in Algorithm \ref{alg:compute_bos}. 
The steps are described abstractly in this section to ensure straightforward generalization to arbitrary locomotor patterns. 
In Sections \ref{sec:experiments} and \ref{sec:results}, a concrete instantiation of each step is described in the case of STS motion.

\begin{figure}[t]
\begin{minipage}{0.48\textwidth}
	\begin{algorithm}[H]
		  \caption{Computing the Reachable Set for STS}
		  \label{alg:compute_bos}
		  \begin{algorithmic}[1]
		    \STATE Given: observations $x_\text{obs}$ of motion
		    \STATE Choose a model for the STS motion (Section \ref{subsec:model}). 
		    \STATE Run optimal control to find $u_\text{obs}$ (Section \ref{subsec:nom}).
		    \STATE Construct controller to track $x_\text{obs}$ (Section \ref{sec:control_design}).
		    \STATE Compute the backwards reachable set (Section \ref{sec:om}).
		  \end{algorithmic}
	\end{algorithm}
\end{minipage}
\end{figure}

\subsection{Preliminaries}
The notation used throughout the remainder of this paper is presented in this section. 
Let $\mathbb{R}^n$ be a n-dimensional set of real numbers.
Let $X \subset \R^n$ be a compact set.
Let $[0,T]$ denote a time interval of interest.
Let $C^1(X,\R)$ be the space of continuously differentiable functions from $X$ to $\R$.
Let $L^2(X,\R)$ be the space of square integrable functions under the Lebesgue measure from $X$ to $\R$.
Let $\mathbb{R}_n[x]$ be the set of polynomials in $x$ with maximum total degree $n$.

\subsection{Model and Observations}
\label{subsec:model}

Next, suppose that the dynamical model describing the motion of an individual is:
\begin{align}
  \dot{x}(t) &= f_\phi(t,x) + g_\phi(t,x)u(t,x) \nonumber \\
  x &\in [\underline{x}, \overline{x}] \subset \R^n  \label{eq:cas}\\
  u &\in [\underline{u}, \overline{u}] \subset \R^m \nonumber
\end{align}

\noindent where $X = [\underline{x},\overline{x}] \subset \R^n$ represents the state space of the model, $f \in C^1([0,T] \times \R^n,\R^n) $ and $g \in C^1([0,T] \times \R^n,\R^m) $ describe how the input $u \in L^2([0,T] \times \R^n, \R^m)$ affect the dynamics, $\phi$ represents the individual specific parameters of the model (e.g. mass, limb length, moment of inertia, etc.), and $\underline{u},\overline{u} \in\R^m$ represent input bounds. 
As each individual is different, $\phi$, $\underline{x}$, $\overline{x}$, $\underline{u}$ and $\overline{u}$ are distinct for each individual and must be identified as described in further detail in Section \ref{sec:experiments}. 
This paper assumes that direct observations of the state trajectory of a nominal locomotor pattern, $x_\text{obs} \in C^1([0,T],\R^n)$, are available. 
This can be constructed after interpolation from a variety of data sources as described in Section \ref{sec:experiments}.
 
\subsection{Identifying an Input from Observations} 
\label{subsec:nom}

After selecting a model, the input, $u_{\text{obs}}:[0,T] \to \R^m$ that generates the given observations must be constructed.
There are two methods for determining the input for the observed motion: inverse dynamics and optimal control.
Inverse dynamics uses the observed variables $x_{\text{obs}}:[0,T] \to X$ to estimate $\dot{x}_{\text{obs}}:[0,T] \to \R^n$.
$u_\text{obs}$ can then be computed for all $t$ in Equation \eqref{eq:cas} using a known $(x_{\text{obs}},\dot{x}_{\text{obs}})$~\cite{Koopman1995}.
As the inverse kinematic solution is sensitive to noise in $x_\text{obs}$, optimal control is used in this paper to compute $u_\text{obs}$.
Optimal control instead calculates $u_\text{obs}$ via the optimization problem: 

\begin{align}
  \underset{u_\text{obs} \in L^2([0,T],\R^m)} {\inf}  \hspace{0.5cm} & \int_0^T \| x(t) - x_\text{obs}(t) \|_2^2 dt  \label{eq:opt_control} \\
  \text {s.t.} \hspace{0.25cm} \dot{x}(t) &= f_\phi(t, x) + g_\phi(t, x) u_\text{obs}(t) \; &\forall t \in [0,T] \nonumber \\
  \phantom{\text{s.t.} \hspace{0.25cm}} x(t) &\in X \; &\forall t \in [0,T]\nonumber \\
  \phantom{\text{s.t.} \hspace{0.25cm}} u_{obs}(t) &\in [\underline{u}, \overline{u}] \; &\forall t \in [0,T] \nonumber
\end{align}

The solution to this problem is a feedforward open loop control input $u_\text{obs}$ that minimises the $L^2$ error between the state trajectory and the observed trajectory.
After treating the nominal input $u_\text{obs}$ as a polynomial function, collocation~\cite{Hargraves1987} is used to transform this optimal control problem into a nonlinear optimization program, which can be efficiently solved by a variety of nonlinear programming solvers. 

\subsection{Feedback Controller Design}\label{sec:control_design}

Neuroscientists, psychologists, motor control researchers, and biomechanists have observed that the nominal trajectories humans follow during locomotor patterns are robust to small perturbations~\cite{Flash1985,Uno1989,Kawato1999,Todorov2002,Cusumano2013}.
This robustness is conferred by feedback about the nominal control input or goal.
To date, experimental research has been unable to identify an overall strategy that endows such robustness. 
 
For example, research has shown that subjects minimise the square of jerk during reaching tasks~\cite{Flash1985}.
Alternatively, others have shown that for endpoint reaching tasks, subjects utilise a time-varying Proportional Derivative (PD) control to reach a specified endpoint~\cite{Liu2007}.  
For the lower body, other researchers tracked the evolution of step width and found that subjects tended to correct deviations with just a proportional controller~\cite{Dingwell2010,Dingwell2013}. 

To imbue the feedforward nominal control input that is identified by the optimal control algorithm in Section \ref{subsec:nom} with this feedback robustness, the following assumptions are made:

\begin{assumption}
For each distinct locomotion action, humans utilise a feedforward control law with corresponding feedback.  
To perform a different action, the subject switches control laws.
\label{as:nom_traj}
\end{assumption}
\noindent Assumption \ref{as:nom_traj} states that for a particular action, such as standing slowly, the subject follows a combination of feedforward and feedback control laws.
If a specific control law is not able to take a subject to standing after perturbation, a subject must switch control laws to stand safely. 

\begin{assumption}
While performing a specific action, the subject utilises a PD feedback around a nominal trajectory, $x_\text{obs}$ to correct deviations in the trajectory.
\label{as:pd_control}
\end{assumption}
\noindent According to Assumption \ref{as:pd_control} the feedback control law is:
  \begin{align}
    u(t,x) &= u_\text{obs}(t) + u_{cc}(t,x) \label{eq:pid_controller}\\
    &= u_\text{obs}(t) + K (x(t) - x_\text{obs}(t)) \nonumber
  \end{align}

where $u_{cc}$ represents the general form of the feedback controller and $K$ represents the PD controller gain acting on the states and observation.
Note, the method presented to estimate the BOS (described in Section \ref{sec:om}) can handle more general nonlinear feedback control inputs. However, as described earlier, the existing literature suggests that humans apply only linear feedback~\cite{Dingwell2010,Dingwell2013}. 

If the gain $K$ on the feedback controller is selected too rigidly, then the control law will oscillate around the desired trajectory rather than converging to the final state of the desired trajectory. 
This can be avoided with sufficiently small gains, as illustrated by the system in Figure \ref{fig:example_bos}.  
To determine this satisfactory feedback gain $K$, we apply a Linear Quadratic Regulator (LQR) algorithm to determine the optimal state feedback law $u$ that minimises a quadratic cost: 
\begin{align}
    &\underset{u_{cc} \in L^2([0,T],\R^m)}{\min} && \hspace{-0.25cm}\int_0^T \Big( (x(t)-x_\text{obs}(t))^T Q (x(t)-x_\text{obs}(t)) \nonumber  
  \label{eq:pid_control} \\
    & & & \hspace{2cm}  +  u_{cc}(t)^T R u_{cc}(t) \Big) dt \\
    & \text{s.t.} && \hspace*{-1.5cm} \dot{x}(t) = Ax(t) + B(u_\text{obs}(t)+u_{cc}(t,x)) \; \forall t \in [0,T] \nonumber
  \end{align}

By selecting $Q=\frac{I}{2}$ and $R=0.005I$ where $I$ is the identity matrix of appropriate dimension, the resulting controller is designed to minimise the $Q$-weighted $L^2$ error of $x$ from $x_\text{obs}$.
For linear systems, the LQR problem has a closed form solution provided by the Algebraic Ricatti Equation described by a linear state feedback law~\cite{Callier1994}. 
For the purposes of this paper, small-angle approximations are used to linearise $f_\phi$ and $g_\phi$ to obtain $A$ and $B$. 

\begin{assumption}
The torque limits are constant throughout the motion.
\label{as:torque_bounds}
\end{assumption}
\noindent As humans do not have the ability to apply arbitrary torque to any joint, individual-specific torque limits $[\underline{u}, \overline{u}]$ are set to the minimum and maximum of $u_\text{obs}$ generated from the optimal control. 

\subsection{Computing the Basin of Stability} \label{sec:om}
Given a model, input bounds, and feedback control input that tracks a nominal observation, the BOS can be formally defined as follows: \textit{the BOS is the set of states as a function of time that can be driven by the feedback control input to a target configuration, $X_T \subset X$, by time $T$}.
In the case of STS, the target set $X_T$ corresponds to the set of states where the subject is standing. 
For brevity, a modified optimization algorithm inspired by \cite{Henrion2013} to compute this BOS is presented:
\begin{subequations} 
\begin{align} 
		& \underset{v \in C^1([0,T] \times \R^n,\R^n)}{\text{inf}} & & \hspace*{-0.75cm} \int\limits_{[0,T] \times X} v(t,x) dtdx && (D) \nonumber \\
		& \text{s.t.} & & \hspace*{-2cm} \frac{\partial v(t,x)}{\partial x} \left(f_\phi(t,x)+g_\phi(t,x)u(t,x)\right)  \label{eq:opt_lyap} \\
		& & & \hspace*{-0.5cm} +  \frac{\partial v(t,x)}{\partial t} \leq 0 && \hspace*{-1.5cm} \forall (t,x) \in [0,T] \times X \nonumber \\
		& & & \hspace*{-0.75cm} v(t,x) \geq 0 && \hspace{-1.5cm} \forall (t,x) \in [0,T] \times X  \\
		& & & \hspace*{-0.75cm} v(T,x) \geq \alpha && \hspace{-1.5cm} \forall x \in X_T \label{eq:opt_xt}
\end{align}
\end{subequations} 

\noindent where $\alpha > 0$ is a parameter that can be selected by the user.
To understand the relationship between the solution to this optimization problem ($v$) and the BOS, notice that $v(t,x) \geq \alpha$ for points that belong on the BOS:
\begin{lemma}
If $v$ is a solution to ($D$), then $v(t,\cdot) \geq \alpha$ on the BOS.
\end{lemma}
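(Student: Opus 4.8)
The plan is to run a standard Lyapunov-style certificate argument: interpret the optimizer $v$ as a time-varying certificate that cannot increase along closed-loop trajectories, and then propagate the terminal lower bound \eqref{eq:opt_xt} backwards in time along any trajectory that reaches the target.

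First I would fix an arbitrary point $(t,x)$ in the BOS and let $\xi:[t,T] \to X$ denote the closed-loop trajectory satisfying $\dot{\xi}(s) = f_\phi(s,\xi(s)) + g_\phi(s,\xi(s))\,u(s,\xi(s))$ with $\xi(t) = x$. By the definition of the BOS this trajectory exists, remains in $X$ on $[t,T]$, and satisfies $\xi(T) \in X_T$; existence and uniqueness of $\xi$ follow from the $C^1$ regularity of $f_\phi$, $g_\phi$ and of the feedback law $u$ in \eqref{eq:pid_controller}, via Picard--Lindel\"of. Next I would evaluate the certificate along this trajectory by setting $w(s) := v(s,\xi(s))$ for $s \in [t,T]$. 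Since $v \in C^1$ and $\xi$ is differentiable, the chain rule gives
\begin{align*}
\dot{w}(s) = \frac{\partial v}{\partial t}(s,\xi(s)) + \frac{\partial v}{\partial x}(s,\xi(s)) \left( f_\phi(s,\xi(s)) + g_\phi(s,\xi(s))\, u(s,\xi(s)) \right).
\end{align*}
The right-hand side is exactly the left-hand side of the decay constraint \eqref{eq:opt_lyap} evaluated at the point $(s,\xi(s)) \in [0,T] \times X$, so that constraint forces $\dot{w}(s) \leq 0$; that is, $w$ is non-increasing along the trajectory.

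Then I would integrate this monotonicity over $[t,T]$ to obtain $w(t) \geq w(T)$, i.e. $v(t,x) \geq v(T,\xi(T))$. Because $\xi(T) \in X_T$, the terminal constraint \eqref{eq:opt_xt} yields $v(T,\xi(T)) \geq \alpha$, and chaining the two inequalities gives $v(t,x) \geq \alpha$. Since $(t,x)$ was an arbitrary point of the BOS, this establishes $v(t,\cdot) \geq \alpha$ on the BOS, as claimed.

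The only delicate point, and the one I would be most careful about, is ensuring that the decay inequality \eqref{eq:opt_lyap} is actually available at \emph{every} instant along the trajectory, which requires $\xi(s) \in X$ for all $s \in [t,T]$ rather than merely at the endpoints. This is precisely why the BOS must be defined in terms of trajectories that remain inside the state-constraint set $X$: if an admissible trajectory were permitted to leave $X$, the constraint would not bind during that excursion and the monotonicity of $w$ could fail. Everything else, namely the solvability of the closed-loop ODE, the chain-rule computation, and the integration step, is routine.
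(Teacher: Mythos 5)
Your proposal is correct and follows essentially the same argument as the paper: both evaluate $v$ along a closed-loop trajectory that reaches $X_T$, use the decay constraint \eqref{eq:opt_lyap} to show $v$ is non-increasing along that trajectory (the paper via the fundamental theorem of calculus, you via monotonicity of $w(s)=v(s,\xi(s))$, which is the same computation), and then chain this with the terminal bound \eqref{eq:opt_xt}. Your additional remarks on existence of the trajectory and on the necessity of the trajectory remaining in $X$ are sound refinements of the same proof, not a different route.
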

\begin{proof}
Suppose $x:[0,T] \to X$ is a trajectory of the model that reaches $X_T$.
Notice that $x(t)$ is in the BOS for all $t \in [0,T]$.
Select an arbitrary $t_s \in [0,T]$, then:
  \begin{align}
    \alpha &\leq v(T,x(T)) \\
    &= v(t_s, x(t_s)) + \int_{t_s}^T \left(\frac{\partial v(t,x(t))}{\partial x} \left(f_\phi(t,x(t)) \right) \right. \\
    & \hspace{0.5cm}\left.+\frac{\partial v(t,x(t))}{\partial x} \left(g_\phi(t,x(t))u(t,x(t))\right) + \frac{\partial v}{\partial t}(t,x(t))\right) dt \\
    &\leq v(t_s, x(t_s)) 
  \end{align}

since $\frac{\partial v(t,x(t))}{\partial x} \left(f_\phi(t,x(t))+g_\phi(t,x(t))u(t,x(t))\right) + \frac{\partial v}{\partial t}(t,x(t)) \leq 0$ on $[0,T] \times X$ and $v(T,\cdot) \geq \alpha$ on $X_T$.
The desired result follows.
\end{proof}

The intuition of the proof is as follows: if $v(T,\cdot) \geq \alpha$ on $X_T$ \eqref{eq:opt_xt}, since $v$ must decrease as the system evolves \eqref{eq:opt_lyap}, for a point $(t,x) \in [0,T] \times X$ to reach $X_T$, $v(t,x(t)) \geq \alpha$ must hold for all time.
As a result, the $\alpha$ super-level set of $v$ at each time $t$ in $[0,T]$ can be used as a test to determine whether a point does not belong to the BOS of the motion under consideration. 
In Figure \ref{fig:example_bos}, for example, the light gray region denotes the $v(t,x) \geq \alpha$ level set with the dark gray region denoting different time slices of the $v(t,x) \geq \alpha$ level set.
Outside of the points that belong to BOS, the optimization problem tries to minimise $v$ by bringing it as close to $0$ as possible.
Several recent papers formally describe the convergence of this approach, which we do not include here for the sake of brevity~\cite{Henrion2013,Majumdar2014,Shia2014b}.

To solve ($D$) numerically, the dynamics are assumed to be polynomial and the state space and target set are assumed to be semi-algebraic sets. 
Since by the Stone-Weierstrass Theorem polynomial functions are able to approximate the behavior of other continuous functions on a compact domain~\cite{rudin1964principles}, this assumption is made without too much loss in generality.
The positivity constraints are converted to sum-of-squares constraint~\cite{Parrilo2000}.
The result is a semidefinite optimization program that tractably constructs an outer approximation to the BOS~\cite{Shia2014b}. 
\section{Experiments} \label{sec:experiments}

This section describes a formal implementation of the method presented in Section \ref{sec:method} and an experiment constructed to evaluate its validity.
One method to verify the correctness of a computed BOS is via direct perturbative experiments; however, these experiments can be prohibitive and are dangerous.
Instead we utilise observations from motor control research to validate the computed stability estimates of distinct STS maneuvers performed by each subject.

\subsection{Intuition from Motor Control} \label{subsec:motor}

Due to the time delay of the nervous system, motor control researchers have hypothesised that the response of perturbations to fast motions is largely governed by open-loop reflex responses \cite{Full2002,Spagna2007}, whereas slower motions allow a closed-loop correcting response to perturbations.
Based on this experimentally validated tradeoff between speed and feedback~\cite{Maki1997a,Wand1980,Hof2010}, we expect slower movements to have a larger basin of stability.

The open- and closed-loop control laws are exemplified by two distinct STS strategies: momentum-transfer and quasi-static~\cite{Hughes1994,Aissaoui1999}, shown in Figure \ref{fig:example_sts_motions}. 
The momentum-transfer strategy (indicated throughout in orange) consists of swinging one's trunk forward rapidly, using the forward momentum of the upper body to stand up. This strategy requires significant postural control, due to a dynamically unstable transition phase ~\cite{Hughes1994}.
In contrast, the quasi-static strategy (indicated throughout in green) consists of leaning forward while sitting to position the centre of mass (COM) above the feet, then using as little momentum as possible to slowly stand. 
The motion is statically stable at any given moment, but requires more energy to perform than the momentum-transfer strategy~\cite{Anan2012}.
Natural STS movements likely form a continuum between the open-loop momentum transfer and the closed-loop quasi-static strategies. 

To validate Algorithm \ref{alg:compute_bos} experimentally, subjects performed STS using their preferred strategy at two speeds and the momentum transfer and quasi-static strategies.
Computed results are considered accurate if (1) the slower preferred strategy has a larger BOS than the faster preferred strategy and (2) the quasi-static strategy has a larger BOS than the dynamic strategy for the same individual.

\begin{figure}[h]
  \begin{subfigure}[b]{0.5\textwidth}
    \centering
    \includegraphics[width=\textwidth,clip,trim={0cm 0cm 0cm 0cm}]{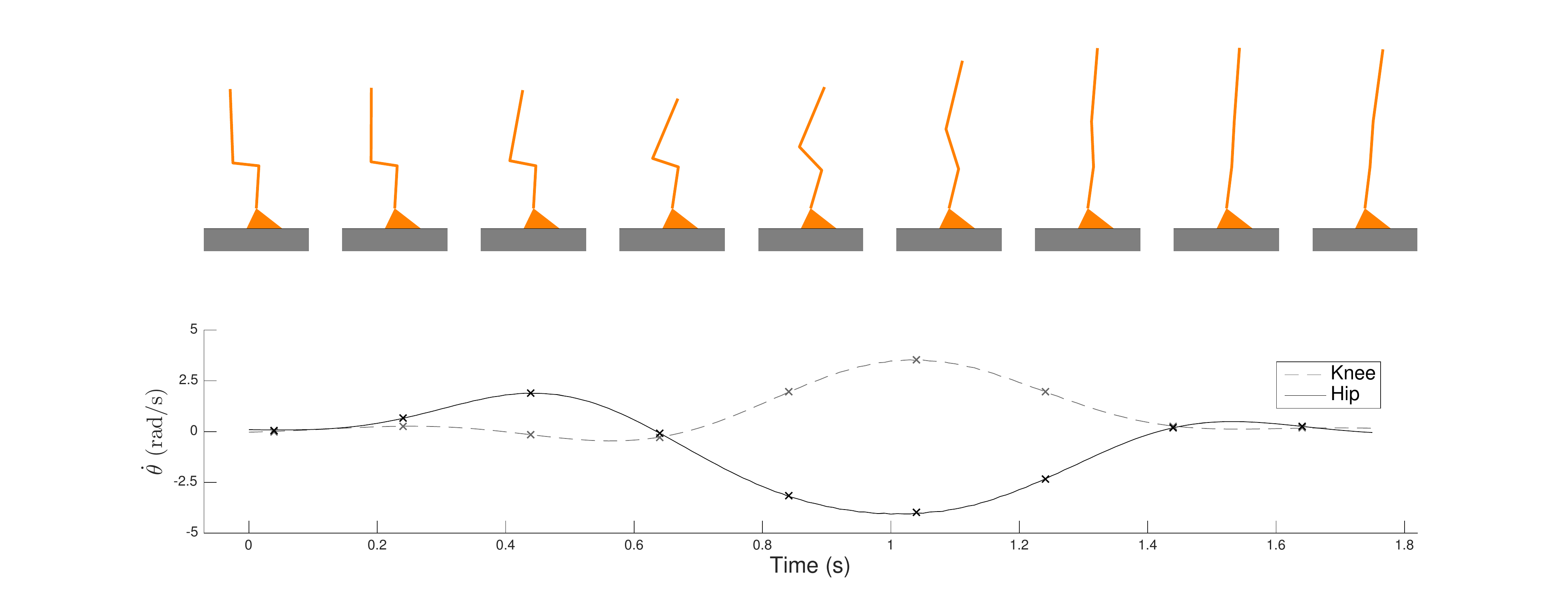}
   \caption{Momentum Transfer STS diagram (top) and velocity profile (bottom).\newline}
   \label{fig:example_mt_sts}
  \end{subfigure}
  \centering
  \begin{subfigure}[b]{0.5\textwidth}
    \centering
    \includegraphics[width=\textwidth,clip,trim={0cm 0cm 0cm 0cm}]{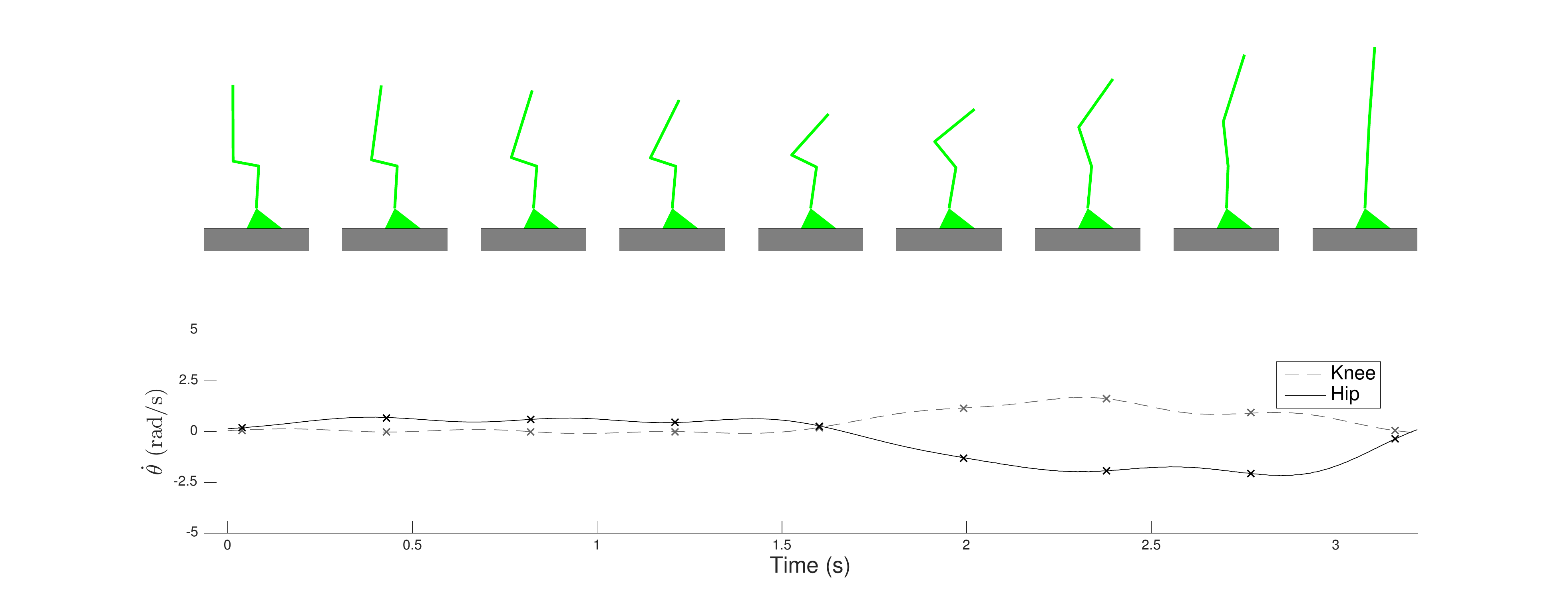}
    \caption{Quasi-static STS diagram (top) and velocity profile (bottom).}
    \label{fig:example_qs_sts}
  \end{subfigure}
  \caption{An illustrate of the two STS Strategies used to perform validation. The difference in angular velocity of the ankle joint is negligible between the two motions and is not shown.}
  \label{fig:example_sts_motions}
\end{figure}

\subsection{Data Collection}\label{subsec:data}

Subjects began in a seated position with their trunk and tibiae oriented vertically, and arms crossed. The chair height was adjusted such that the subject's femurs were parallel to the ground. Subjects wore a customised motion capture suit with 43 PhaseSpace markers (shown in Figure \ref{fig:mocap_suit}). 
STS movements were recorded using an AMTI OPT464508 force plate~\cite{OPT464508} under the subject's feet\footnote{Force measurements were used to determine the start and end time of the STS motion} and a PhaseSpace Impulse X2 motion capture system with 8 infrared cameras~\cite{PhaseSpace} (Figure \ref{fig:exp_setup}).   
Force data were collected at 2400Hz, motion capture data were collected at 480Hz, and the subject's skeleton was extracted using PhaseSpace's Recap2 software~\cite{Recap2}.
Both the motion capture and force plate data were smoothed using a 4th-order Butterworth filter with a cut-off frequency of 2Hz.  

\begin{figure}
  \begin{subfigure}[b]{0.28\textwidth}
    \centering
    \includegraphics[height=6cm,clip,trim={1cm 1cm 1cm 1cm}]{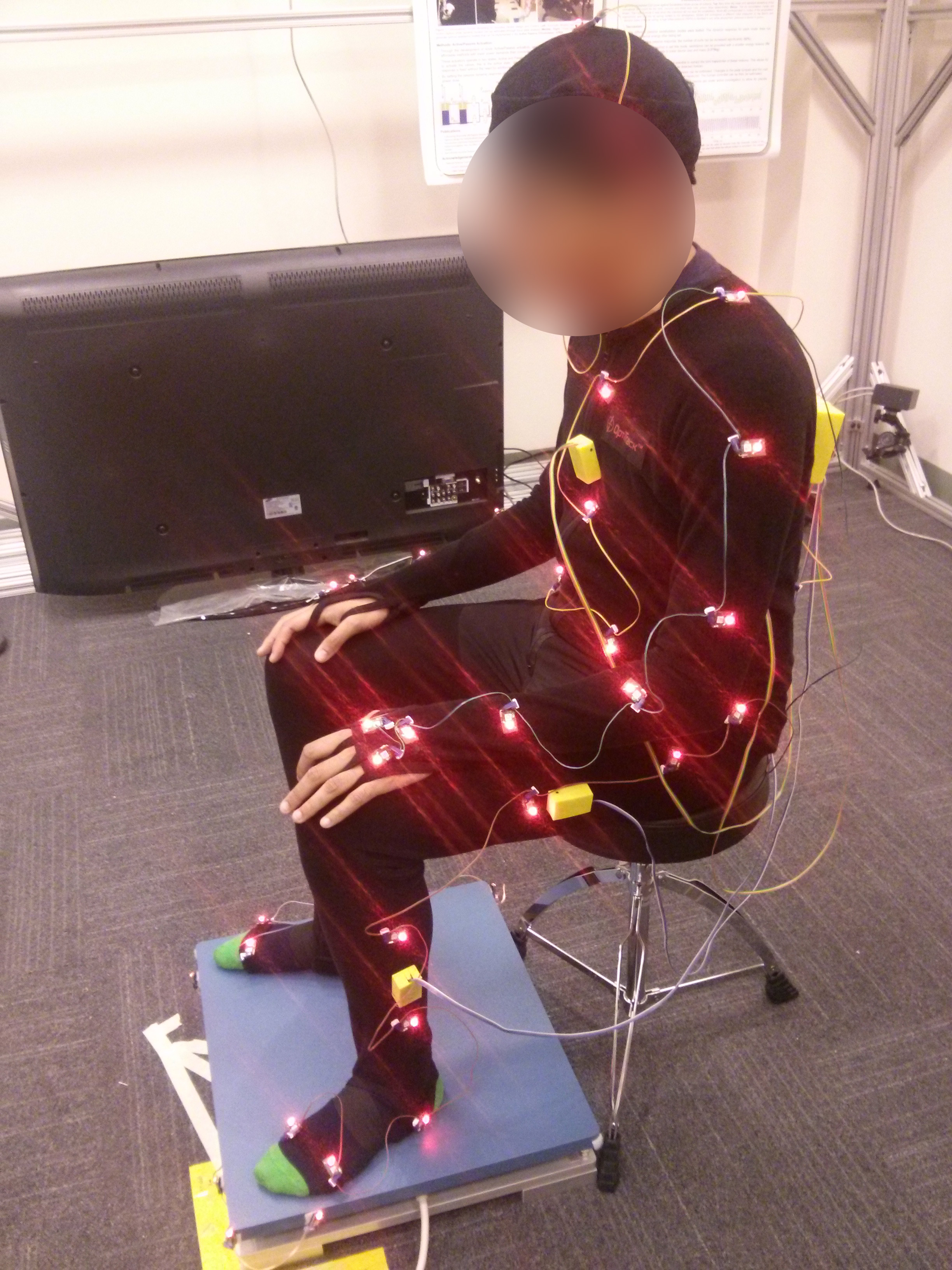}
   \caption{Seated subject prior to STS motion with feet on the force plate.\newline}
   \label{fig:exp_setup}
  \end{subfigure}
  \centering
  \begin{subfigure}[b]{0.2\textwidth}
    \centering
    \includegraphics[height=6cm,clip,trim={0.2cm 0.2cm 0.2cm 0.2cm}]{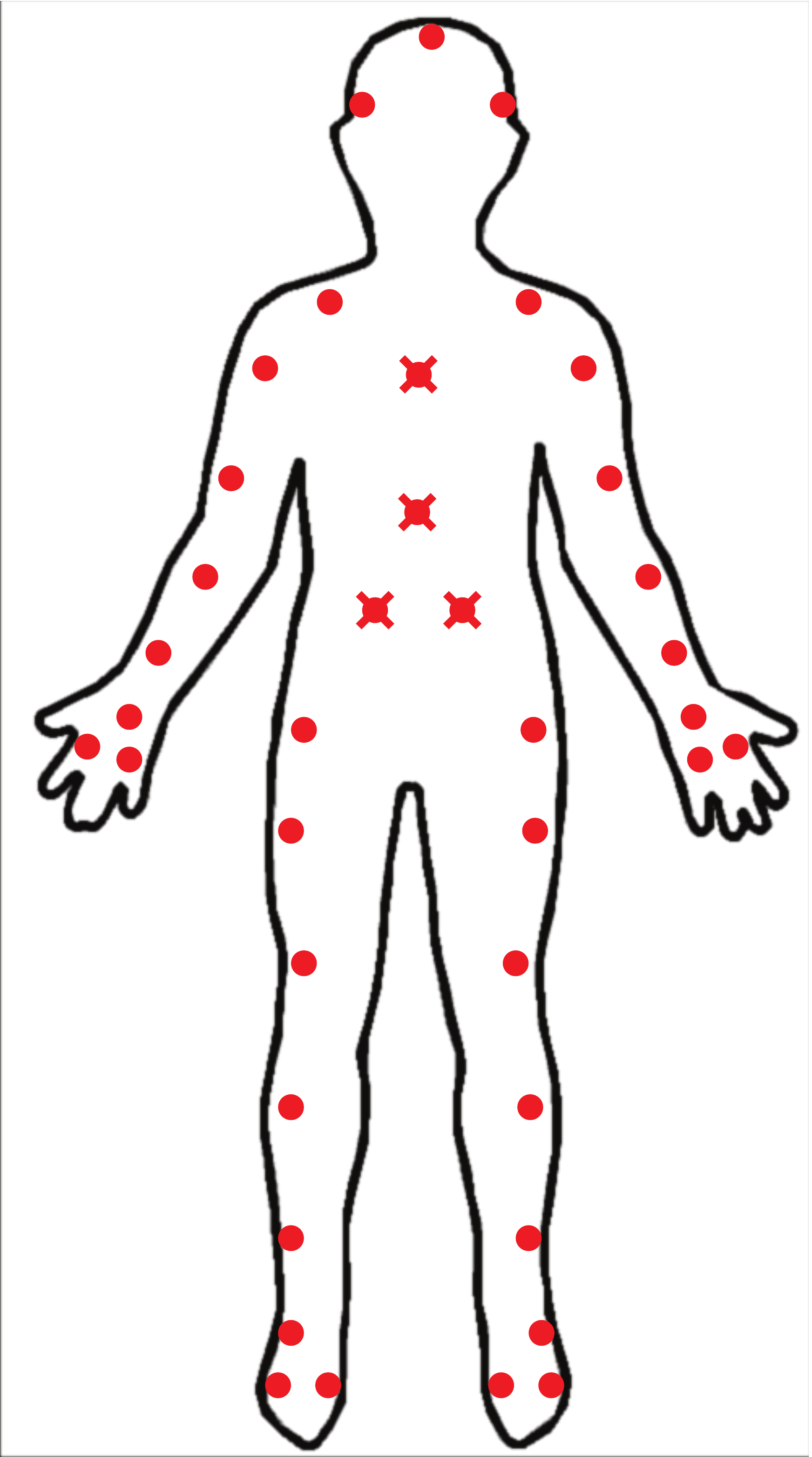}
    \caption{Circles indicate markers placed on front, Xs indicate markers placed on back.}
    \label{fig:mocap_suit}
  \end{subfigure}
  \caption{Experimental Setup with LED Marker Placement}
\end{figure}

We collected data from 2 cohorts: 10 young and healthy subjects and 5 older and healthy subjects\footnote{This study was approved by the UC Berkeley Center for Protection of Human Subjects, Protocol \#2015-07-7767 and informed consent was obtained from all subjects.}.  
Data for individual subjects is shown in Table \ref{tab:subjects}. 
\begin{table}
\centering
  \begin{tabular}{c|c|c|c|c|c}
    \toprule
    Group & ID & Gender & Age & Height (cm) & Weight (kg) \\ \hline
    \multirow{10}{*}{Young}  
    & 1 & F & 23 & 153.7 & 70.3 \\ \cline{2-6}
    & 2 & F & 25 & 165.1 & 68.6 \\ \cline{2-6}
    & 4 & M & 37 & 184.2 & 74.0 \\ \cline{2-6}
    & 5 & M & 26 & 180.3 & 66.2 \\ \cline{2-6}
    & 6 & F & 22 & 165.1 & 58.7 \\ \cline{2-6}
    & 7 & M & 28 & 175.3 & 55.1 \\ \cline{2-6}
    & 8 & M & 29 & 175.3 & 79.8 \\ \cline{2-6}
    & 9 & M & 21 & 167.6 & 64.9 \\ \cline{2-6}
    & 10 & M & 25 & 172.7 & 69.1 \\ \cline{1-6}
    & 14 & F & 25 & 160.0 & 54.2 \\ \cline{2-6}
    \multirow{5}{*}{Older} 
    & 3 & F & 84 & 162.6 & 65.3 \\ \cline{2-6}
    & 11 & M & 69 & 185.4 & 92.5 \\ \cline{2-6}
    & 12 & M & 77 & 170.2 & 66.2 \\ \cline{2-6}
    & 13 & F & 76 & 164.7 & 64.6 \\ \cline{2-6}
    & 15 & M & 74 & 175.3 & 69.1 \\
    \bottomrule
  \end{tabular}
  \caption{Data for each subject}
  \label{tab:subjects}
\end{table}
Initially, subjects were asked to stand without instruction to record the natural STS strategies at slow and fast speeds (`Untrained' dataset). Subsequently, subjects were shown videos demonstrating the momentum transfer and quasi-static STS strategies to avoid individual interpretation of the motion. 
The subjects were then asked to perform the momentum transfer and quasi-static STS strategies in a randomised order (`Trained' dataset')\footnote{Videos can be found at: \url{https://www.w3id.org/people/vshia/jrsi}}.

\subsection{Standing Models}

The inverted pendulum model (IPM), shown in Figure \ref{fig:ipm}, and the double inverted pendulum (DPM), shown in Figure \ref{fig:dpm} are the controlled dynamic models for STS investigated in this paper.
Although the DPM is a more accurate representation of human morphology, IPM has more widespread use due to the complexity of DPM.

The IPM consists of an inverted pendulum attached to a fixed foot on the ground with the point mass $m$ at length $l$ away from the joint.  
\begin{figure}
\begin{subfigure}[b]{0.24\textwidth}
\centering
\includegraphics[width=\textwidth,clip=true,trim=5cm 1.5cm 4cm 1cm]{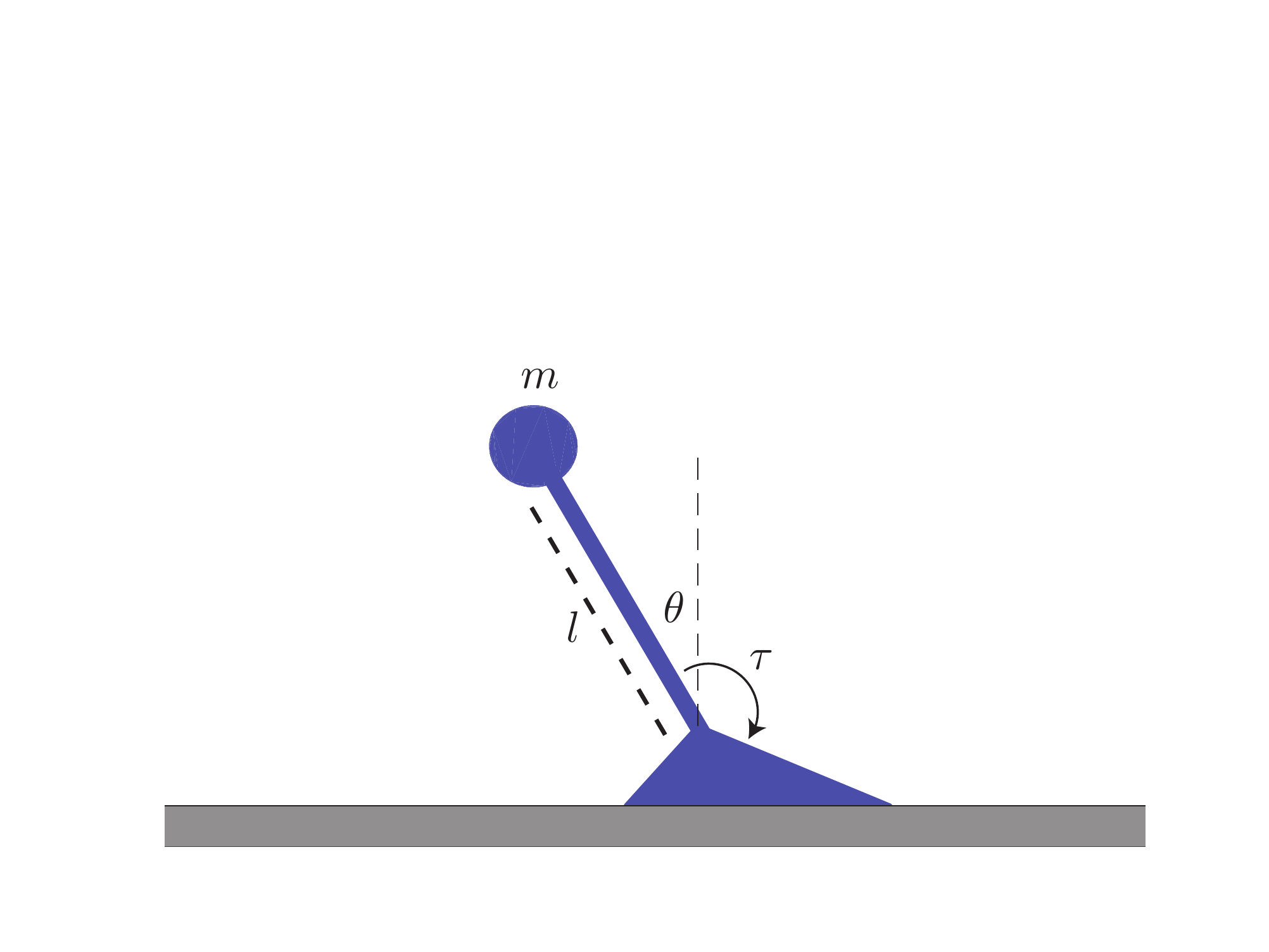}
\caption{Inverted Pendulum Model (IPM)}
\label{fig:ipm}
\end{subfigure}
\begin{subfigure}[b]{0.24\textwidth}
\centering
\includegraphics[width=\textwidth,clip=true,trim=5cm 1.5cm 4cm 1cm]{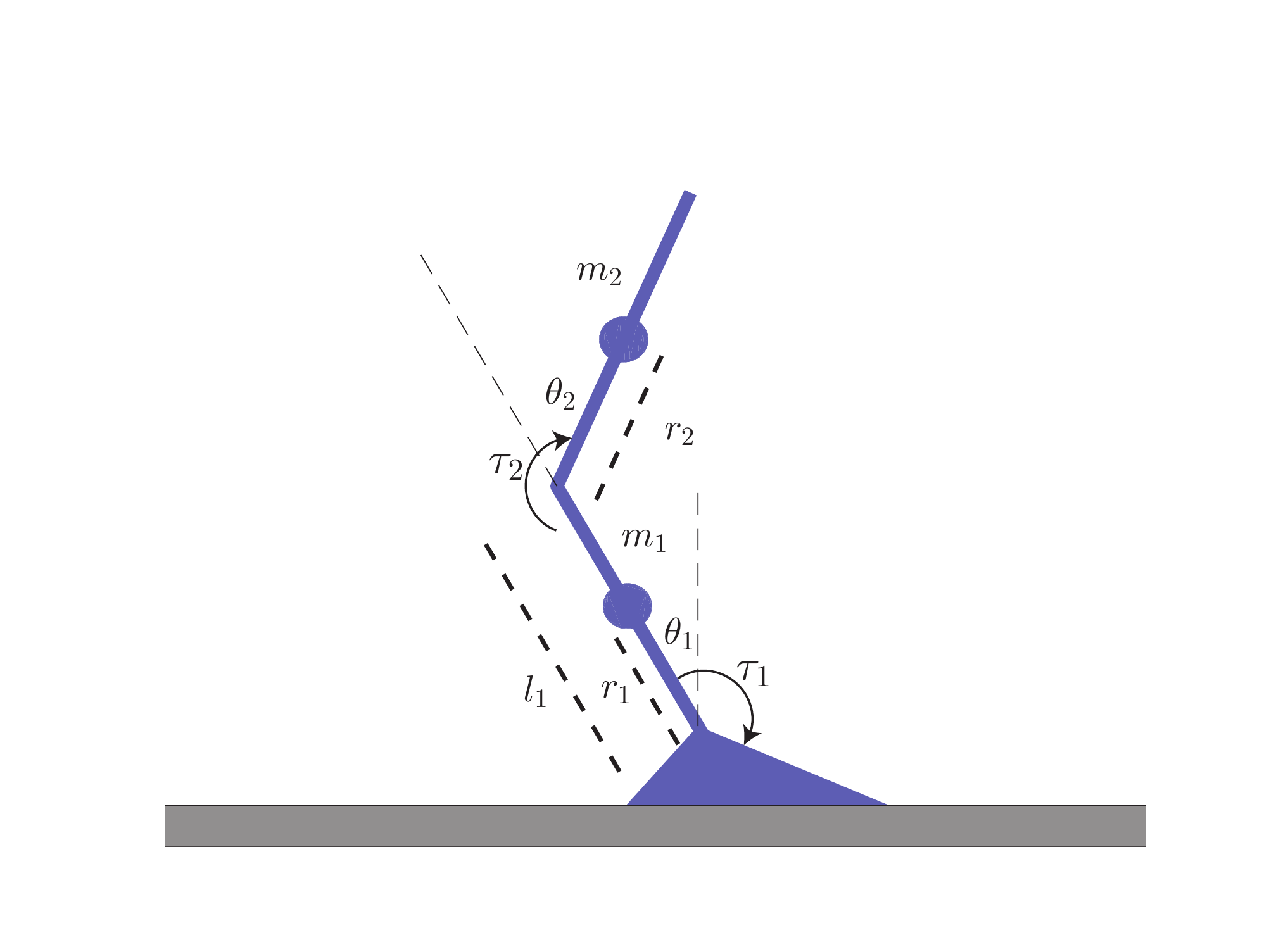}
\caption{Double Pendulum Model (DPM)}
\label{fig:dpm}
\end{subfigure}
\caption{The pair of STS controlled dynamic models considered in this paper.}
\end{figure}
Let $\theta$ represent the angle (with respect to the vertical) and $\dot{\theta}$ represent angular velocity of the pendulum. 
Together both variables are the state space of the IPM, while $\tau$ represents the actuation at the ankle.
The dynamics of the IPM can be found in ~\cite{Pai1997}.
A $5^\text{th}$ order Taylor expansion of the dynamics is used while solving for the BOS using (D).
The motion capture of each individual was fit to the IPM by setting $m$ as the subject's mass, $l$ as the average distance from the subject's ankle to the subject's COM, and $\theta$ as the angle from the ankle to the subject's COM.  

The DPM is a double inverted pendulum attached to a fixed foot on the ground.  
Let $\theta_1$ and $\dot\theta_1$ represent the angle and angular velocity of the lower link to the vertical and $\theta_2$ and $\dot\theta_2$ represent the angle and angular velocity of the lower to upper link shown in Figure \ref{fig:dpm}.  Together these variables represent the state space of the DPM while $\tau_1$ and $ \tau_2$ represent the ankle and hip actuation, respectively.
The dynamics of the DPM can be found in \cite{DPM_model}.

The motion capture of each individual was fit to the DPM by setting  $m_1$ as the mass of the subject's lower body (calf and thigh), $m_2$ as the mass of the subject's upper body, $l_1$ as the average length of the subject's ankle to hip, $r_1$ as the average length from the ankle to the COM of the lower body, and $r_2$ as the average length from the hip to the COM of the upper body.
$\theta_1$ represents the angle from the subject's ankle to hip and $\theta_2$ represents the angle of the subject's hip to upper body.

For both models, masses and the COM positions of each individual limb were computed using tabulated values found in \cite{Pavol2002}.
Individualised torque bounds are set to the maximum and minimum torques obtained via the optimal control.
The domain bounds are the minimum and maximum observed values from the data.
\section{Results}\label{sec:results}
In this section we compare an existing method of estimating stability to the framework proposed above.
All analysis was performed on a system with an Intel Core i7-4930K 3.40GHz processor with 12 cores and 32 GB RAM.
The optimal control problem was solved using MATLAB's nonlinear solver \textit{fmincon}~\cite{matlab_optimization_toolbox}.
The optimization problem ($D$) is solved using SPOTLESS~\cite{spotless} and MOSEK~\cite{mosek}.
Code and figures may be found at: \url{https://www.w3id.org/people/vshia/jrsi}.

\subsection{Results using an existing stability metric}\label{sec:fujimoto}
The existing model-based approach for determining the BOS of STS is called Region of Stability based on Velocity (ROSv). 
This method plots the normalised position and velocity\footnote{Position normalised to the subject's foot length and velocity normalised to pendulum length / second.} of the subject's COM at the instant they rise from the chair (as in Figure \ref{fig:fujimoto})~\cite{Pai2003,Fujimoto2013}. 
Points left of the black line indicate insufficient velocity to stand, whereas points right of the dashed line indicate a catastrophic fall forward. The distance to the dashed line is used to measure stability, with larger distances indicating higher stability.

Table \ref{tab:metric_summary} shows the median ROSv value across 5 trials for each STS motion.  In both cohorts, ROSv overall determines that slow and quasi-static manoeuvres are more stable than the fast and momentum-transfer manoeuvres, respectively, which is congruent with the intuition developed in Section \ref{subsec:motor}.
Upon further examination, ROSv is most unreliable when computing the stability of young subjects performing slow and fast STS motions. Figure \ref{fig:fujimoto} shows the ROSv plot for a young subject (a. ID 7) and an older subject (b. ID 11) that are inaccurately characterised by ROSv.
Hereafter, we continue to highlight subjects ID 7 and 11 to compare the accuracy of each method.
These results illustrate the deficiencies of estimating the stability of motion with a single feature and the inability of the ROSv metric to characterise the specific perturbations that lead to a fall.

\begin{figure}[h]
  \centering
  \begin{subfigure}[b]{0.24\textwidth}
    \centering
    \includegraphics[width=\textwidth,trim={0cm 0cm 0cm 0cm},clip]{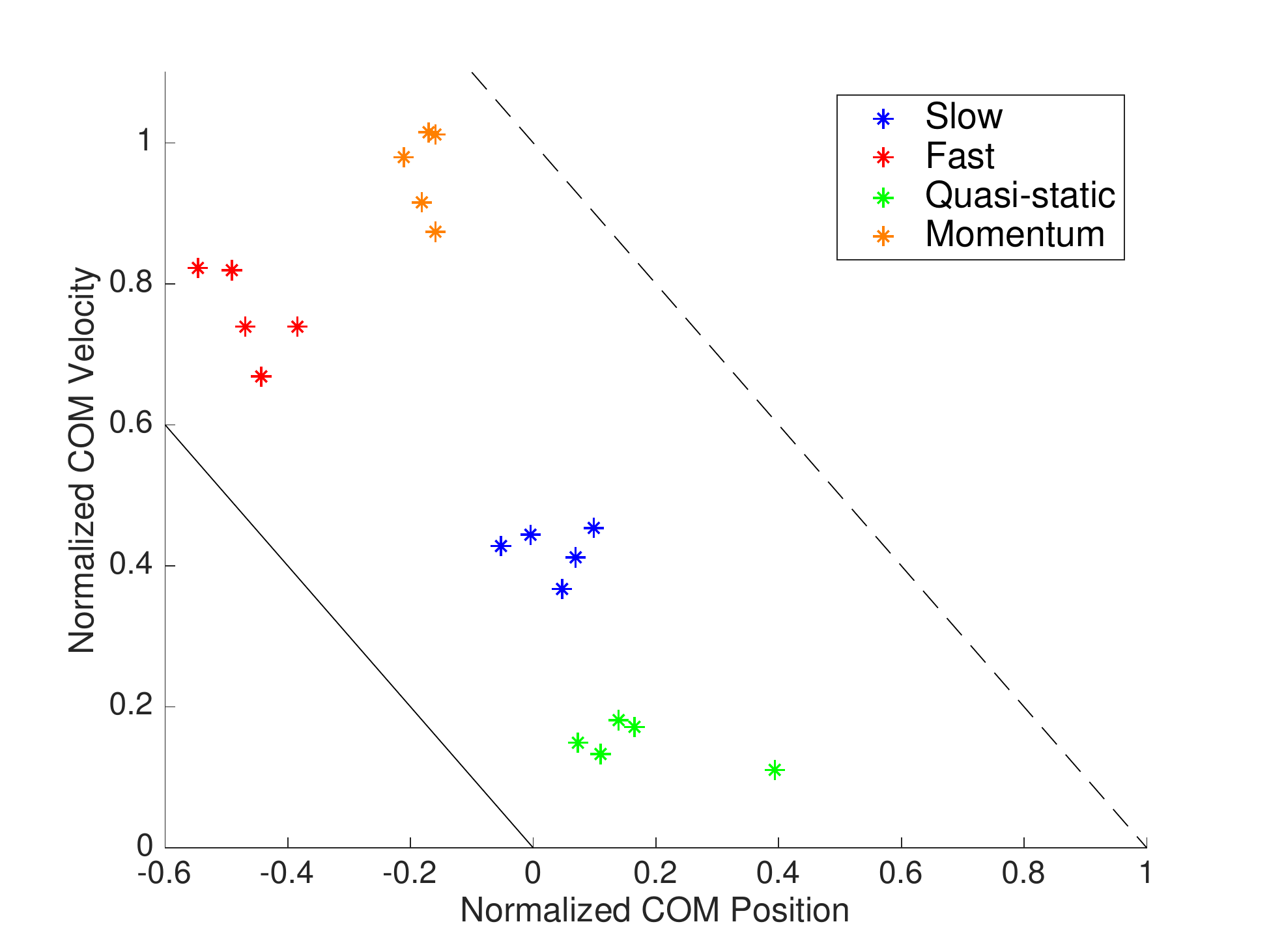}
    \caption{Young Subject (ID 7)}
    \label{fig:fujimoto_a}
  \end{subfigure}
  \begin{subfigure}[b]{0.24\textwidth}
    \centering
    \includegraphics[width=\textwidth,trim={0cm 0cm 0cm 0cm},clip]{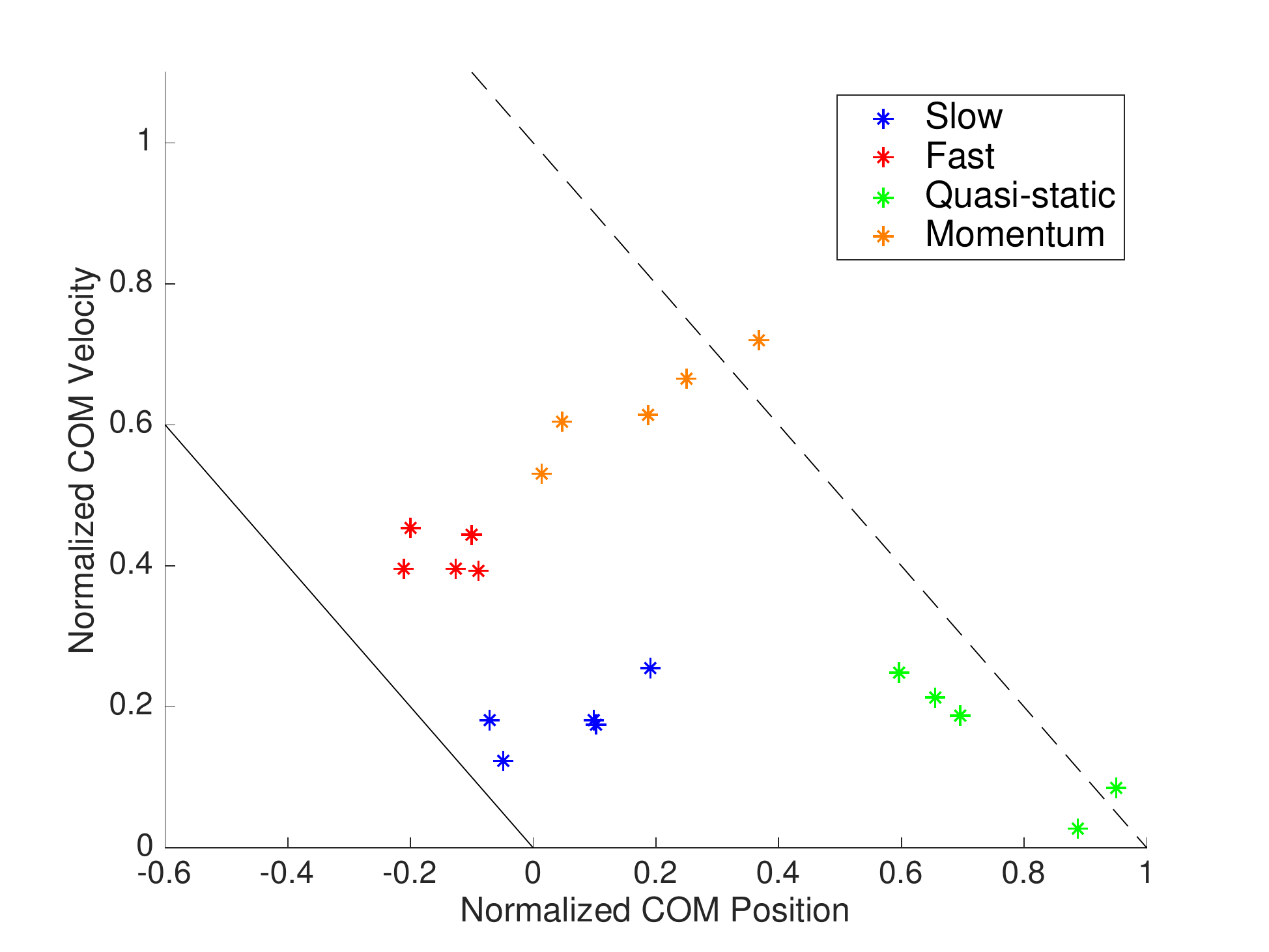}
    \caption{Older subject (ID 11)}
    \label{fig:fujimoto_b}
  \end{subfigure}
  \caption{ROSv plots of an individual's STS motions. The stars denote individual trials for each motion.  The area between the black and dashed lines represent initial positions and velocities where subjects have enough torque to stand. The area to the right of the dashed line indicates falling forward, and the area to the left of the black line indicates remaining seated. 
}
  \label{fig:fujimoto}
\end{figure}

\subsection{Results for IPM}
Using the method proposed in Section \ref{sec:method}, the BOS for each individual's motion is computed for the IPM.
The optimal control was performed using $101$ time steps with a polynomial input of degree $6$, and the optimization problem ($D$) was solved with a degree $14$ polynomial.
The entire pipeline for a single action required an average of $211$ seconds to compute.

To compare the computed BOS for trajectories of different time lengths, the BOS volume is normalised by the volume of the domain with bounds.
For example, 100\% indicates that the bounded domain is in the BOS and 0\% indicates that the BOS is empty.
Table \ref{tab:ipm_metrics} shows the median of the normalised volumes for the computed BOS of each subjects over all trials of a specific manoeuvre. 

All slower and quasi-static STS motions have larger basins than faster and momentum-transfer STS motions, respectively, indicating that subjects who use slower and more static motions are able to withstand more perturbations.
The proposed method correctly determines the STS strategies with higher stability according to the intuition developed in Section \ref{sec:experiments}.

\begin{table}[h]
  \centering
  \begin{tabular}{c|c|cccc}
  \toprule
    \multirow{2}{*}{Group} & \multirow{2}{*}{ID} & \multicolumn{2}{c}{Untrained Motion} & \multicolumn{2}{c}{Trained STS Strategy} \\ \cline{3-6}
    & & Slow & Fast & Quasi-static & Momentum \\ \hline
    \multirow{10}{*}{Young} 
& 1 & 35.9 & 23.0 & 34.5 & 11.9  \\ \cline{2-6} 
& 2 & 42.9 & 17.2 & 46.9 & 15.9  \\ \cline{2-6}  
& 4 & 38.5 & 16.8 & 30.4 & 13.9  \\ \cline{2-6}  
& 5 & 38.1 & 18.4 & 38.1 & 19.3  \\ \cline{2-6}  
& 6 & 41.0 & 9.7 & 45.1 & 8.7  \\ \cline{2-6}  
& \textbf{7} & \textbf{31.2} & \textbf{16.5} & 37.5 & 15.0  \\ \cline{2-6}  
& 8 & 37.3 & 21.9 & 37.5 & 17.2  \\ \cline{2-6}  
& 9 & 35.4 & 18.5 & 38.7 & 14.3  \\ \cline{2-6}  
& 10 & 36.1 & 10.7 & 38.9 & 8.8  \\ \cline{2-6}  
& 14 & 37.7 & 16.7 & 35.3 & 11.6  \\ \cline{1-6}  
    \multirow{5}{*}{Older} 
& 3 & 38.8 & 26.1 & 42.2 & 23.4  \\ \cline{2-6}  
& \textbf{11} & 40.3 & 21.0 & \textbf{45.5} & \textbf{14.5}  \\ \cline{2-6}  
& 12 & 35.5 & 19.1 & 34.6 & 21.1  \\ \cline{2-6}  
& 13 & 51.2 & 23.4 & 53.0 & 9.9  \\ \cline{2-6}  
& 15 & 45.9 & 16.8 & 40.6 & 12.0  \\ \cline{2-6}  
    \bottomrule
  \end{tabular}
  \caption{Median volume of the BOS for each STS strategy calculated using the IPM, as a percentage of the domain with bounds. The individuals and strategies illustrated in Figure \ref{fig:fujimoto} are shown in bold.}
  \label{tab:ipm_metrics}
\end{table}

The shape of the BOS in Figures \ref{fig:bos_ipm_young} and \ref{fig:bos_ipm_older} indicates the perturbations the individuals in Figure \ref{fig:fujimoto} are able to withstand under a specific control.
Notice that the BOS for the quasi-static strategy is larger, indicating greater stability, at the onset of the STS action.
These results demonstrate that the proposed method succeeds in cases that ROSv fails to properly assess, and provides further information regarding sources of instability throughout the STS action.

\begin{figure*}[h]
  \begin{subfigure}[b]{0.48\textwidth}
    \centering
    \caption{Young subject - Slow}    
    \includegraphics[width=\textwidth,trim={1cm 0.5cm 0.75cm 1cm},clip]{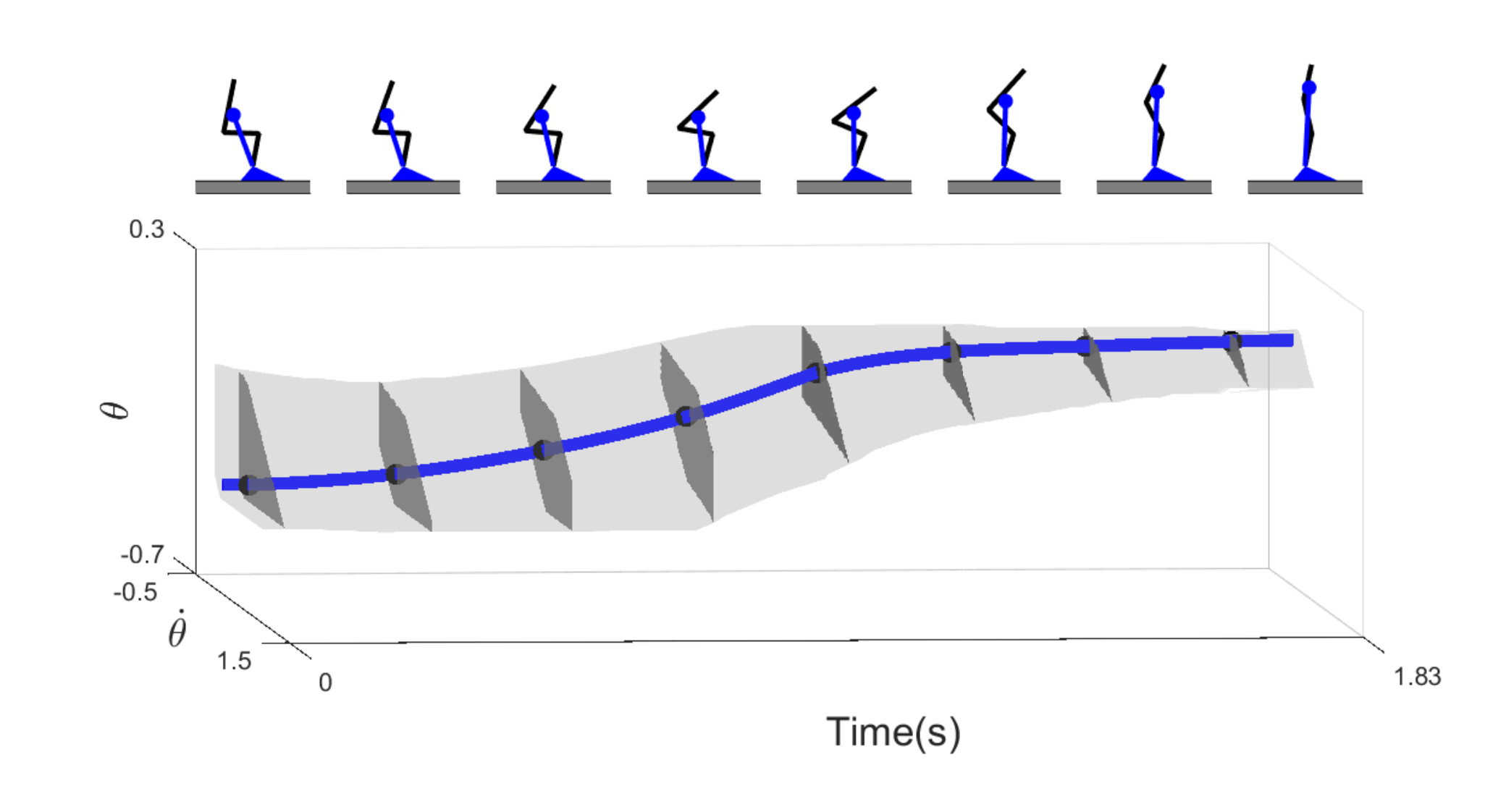}
    \label{fig:bos_ipm_a}
  \end{subfigure}
  \begin{subfigure}[b]{0.48\textwidth}
    \centering
    \caption{Young subject - Fast}
    \includegraphics[width=\textwidth,trim={1cm 0.5cm 0.75cm 1cm},clip]{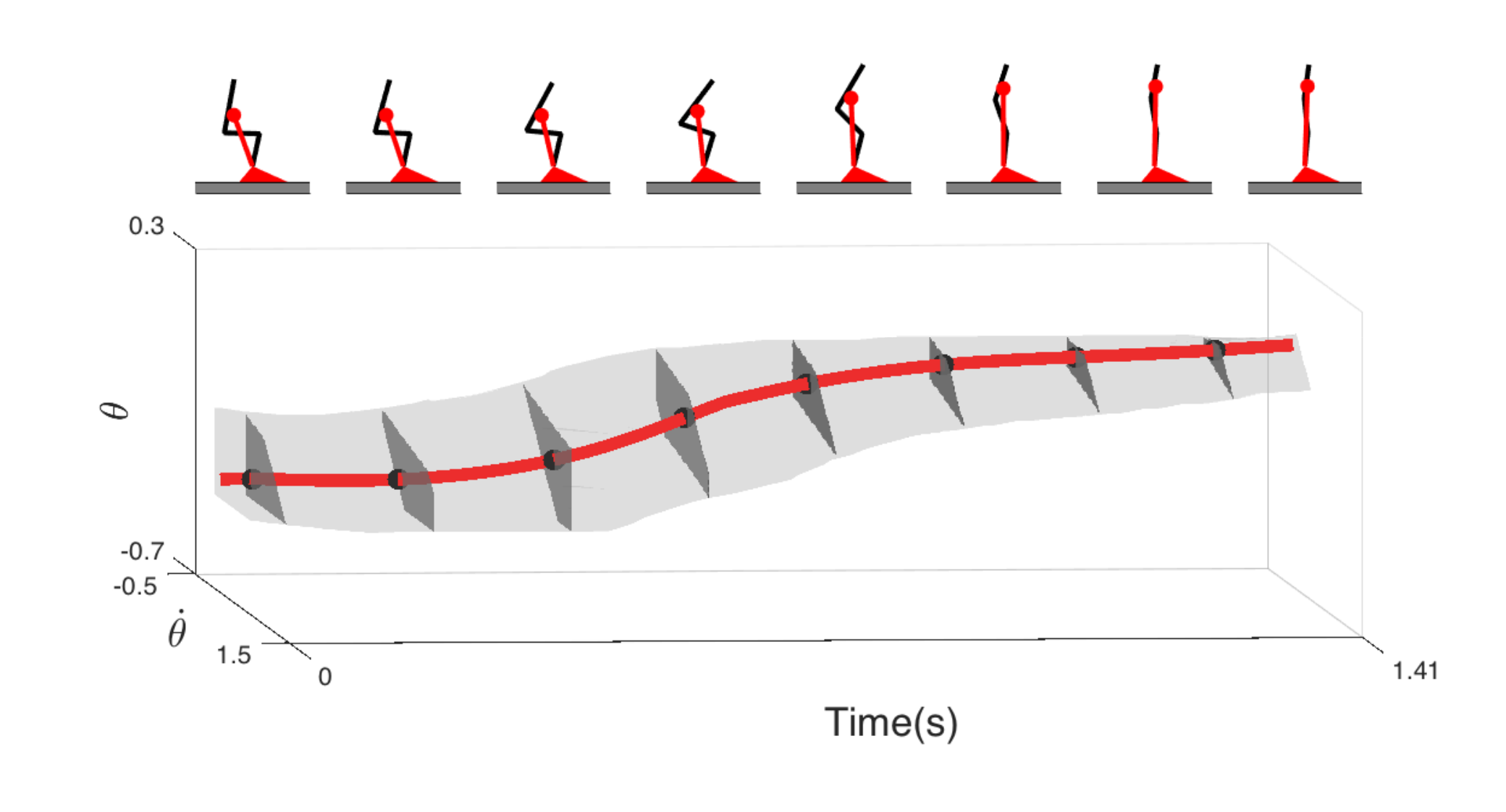}
    \label{fig:bos_ipm_b}
  \end{subfigure}
    \caption{The computed BOS for the IPM using the proposed method for untrained STS motions of subject ID 7. The top panels show a diagram of the STS motion through time, with the non-gray colour depicting the IPM representation of the motion. The bottom panels show the computed BOS for the IPM in the state space of the model. Thick coloured lines represent the observed trajectory ($x_\text{obs}$) of the STS motion. The gray region represents the computed BOS with time slices in dark gray corresponding to the diagram above. Time is on the X-axis (scaled in each subplot), angle from foot to COM on the Y-axis, and angular velocity on the Z-axis.}
  \label{fig:bos_ipm_young}
\end{figure*}

\begin{figure*}[h]
  \begin{subfigure}[b]{0.48\textwidth}
    \centering
    \caption{Older subject - Quasi-static}
    \includegraphics[width=\textwidth,trim={1cm 0.5cm 0.75cm 1cm},clip]{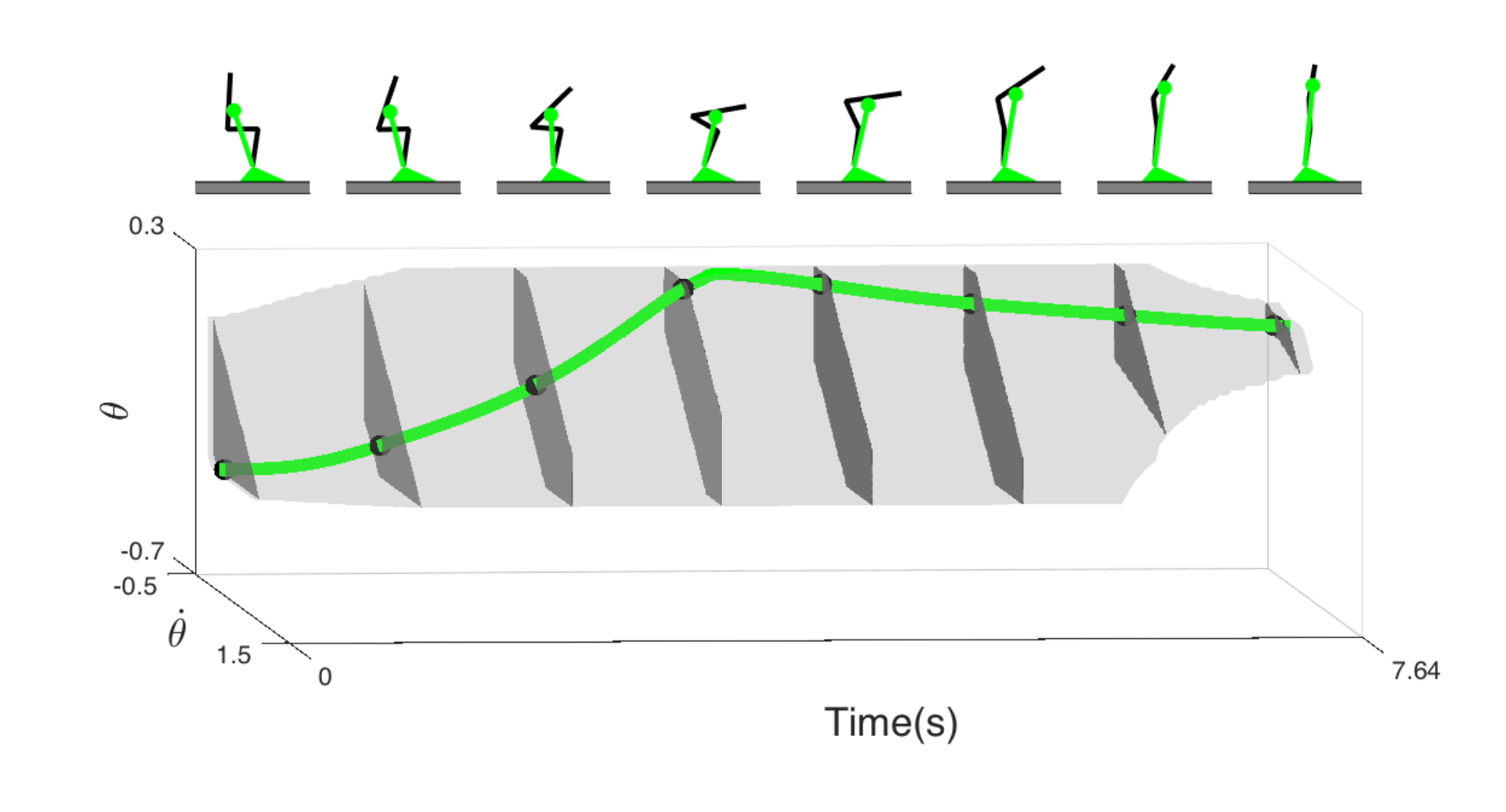}
    \label{fig:bos_ipm_g}
  \end{subfigure}
  \begin{subfigure}[b]{0.48\textwidth}
    \centering
    \caption{Older subject - Momentum Transfer}
    \includegraphics[width=\textwidth,trim={1cm 0.5cm 0.75cm 1cm},clip]{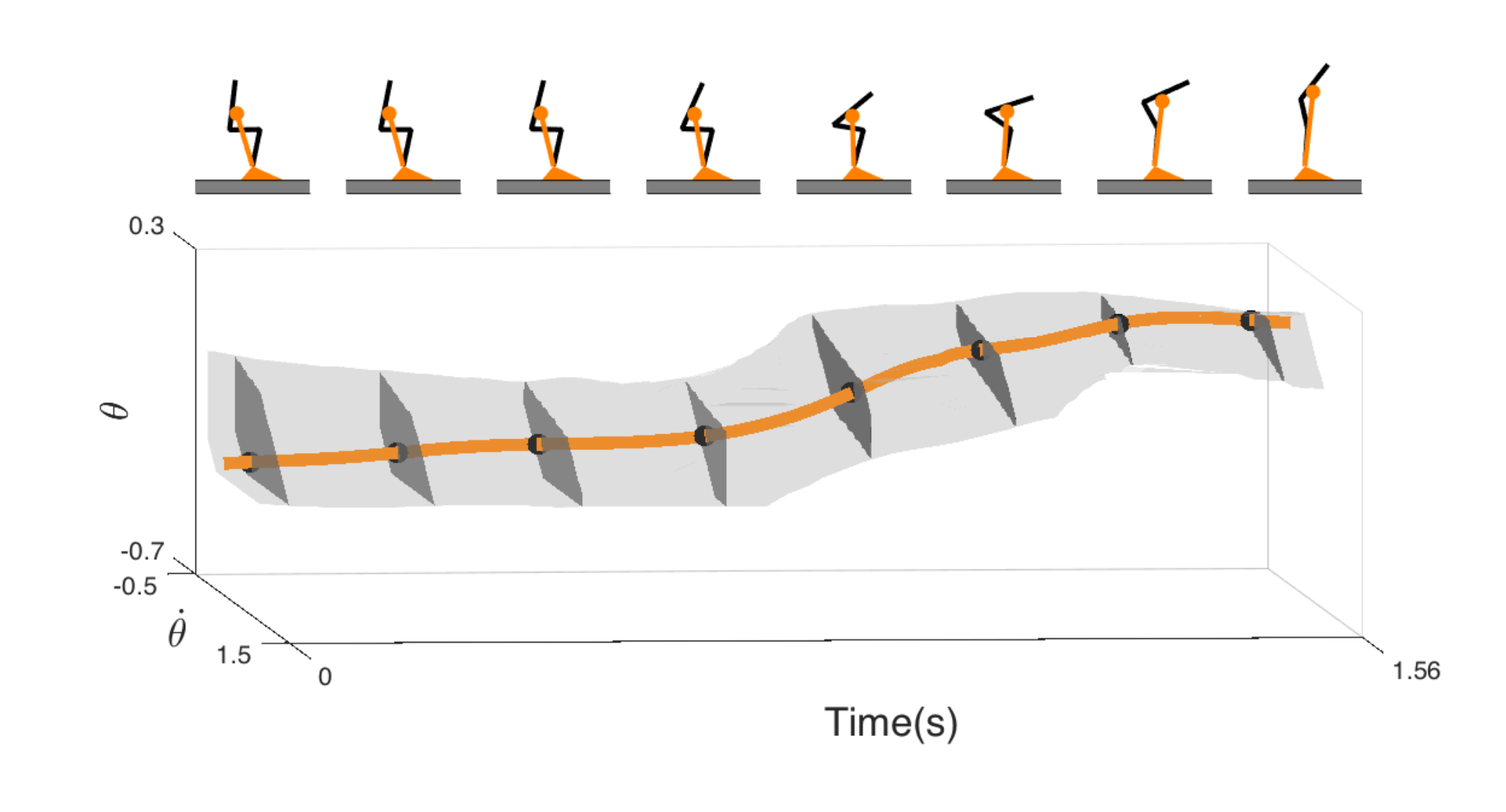}
    \label{fig:bos_ipm_h}
  \end{subfigure}
  \caption{The computed BOS for the IPM using the proposed method for trained STS motions of subject ID 11. The top panels show a diagram of the STS motion through time, with the non-gray colour depicting the IPM representation of the motion. The bottom panels show the computed BOS for the IPM in the state space of the model. Thick coloured lines represent the observed trajectory ($x_\text{obs}$) of the STS motion. The gray region represents the computed BOS with time slices in dark gray corresponding to the diagram above. Time is on the X-axis (scaled in each subplot), angle from foot to COM on the Y-axis, and angular velocity on the Z-axis.}
  \label{fig:bos_ipm_older}
\end{figure*}

\subsection{Results for DPM}
To determine the effect of using a model that more accurately reflects the morphology of an individual, the BOS for each individual's motion is computed for the DPM.
The optimal control was performed using $101$ time steps with a polynomial input of degree $4$, and the optimization problem ($D$) was solved with a degree $8$ polynomial.
The entire pipeline for a single action required an average of $4130$ seconds to compute.
Table \ref{tab:dpm_metrics} describes the volume of the BOS normalised by the volume of the bounded domain using the the DPM for all subjects.  

\begin{figure*}[h]
  \begin{subfigure}[b]{0.48\textwidth}
    \centering
    \caption{Young subject - Slow}    
    \includegraphics[width=\textwidth,trim={1cm 0.5cm 0.75cm 1cm},clip]{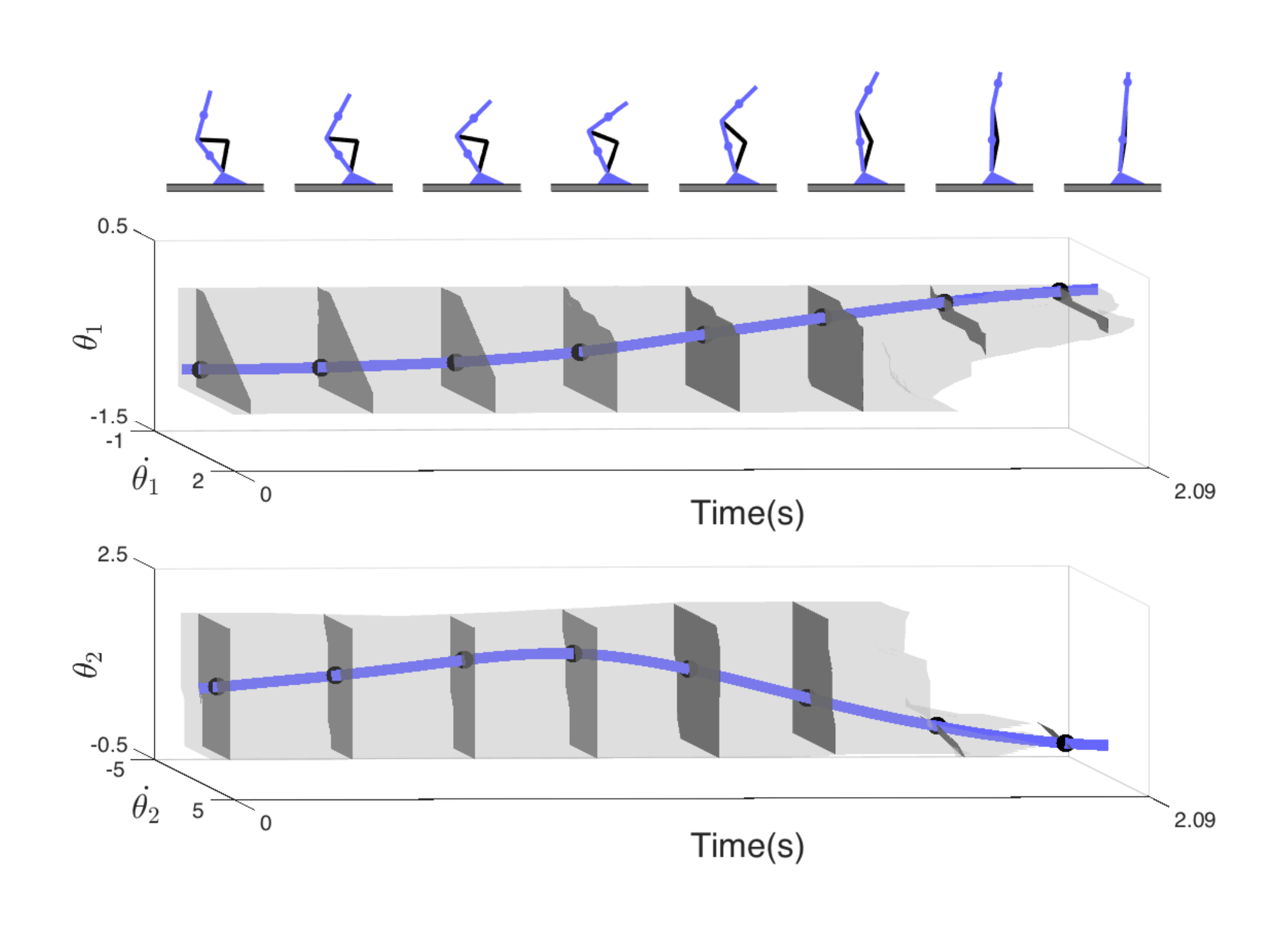}
    \label{fig:bos_dpm_a}
  \end{subfigure}
  \begin{subfigure}[b]{0.48\textwidth}
    \centering
    \caption{Young subject - Fast}
    \includegraphics[width=\textwidth,trim={1cm 0.5cm 0.75cm 1cm},clip]{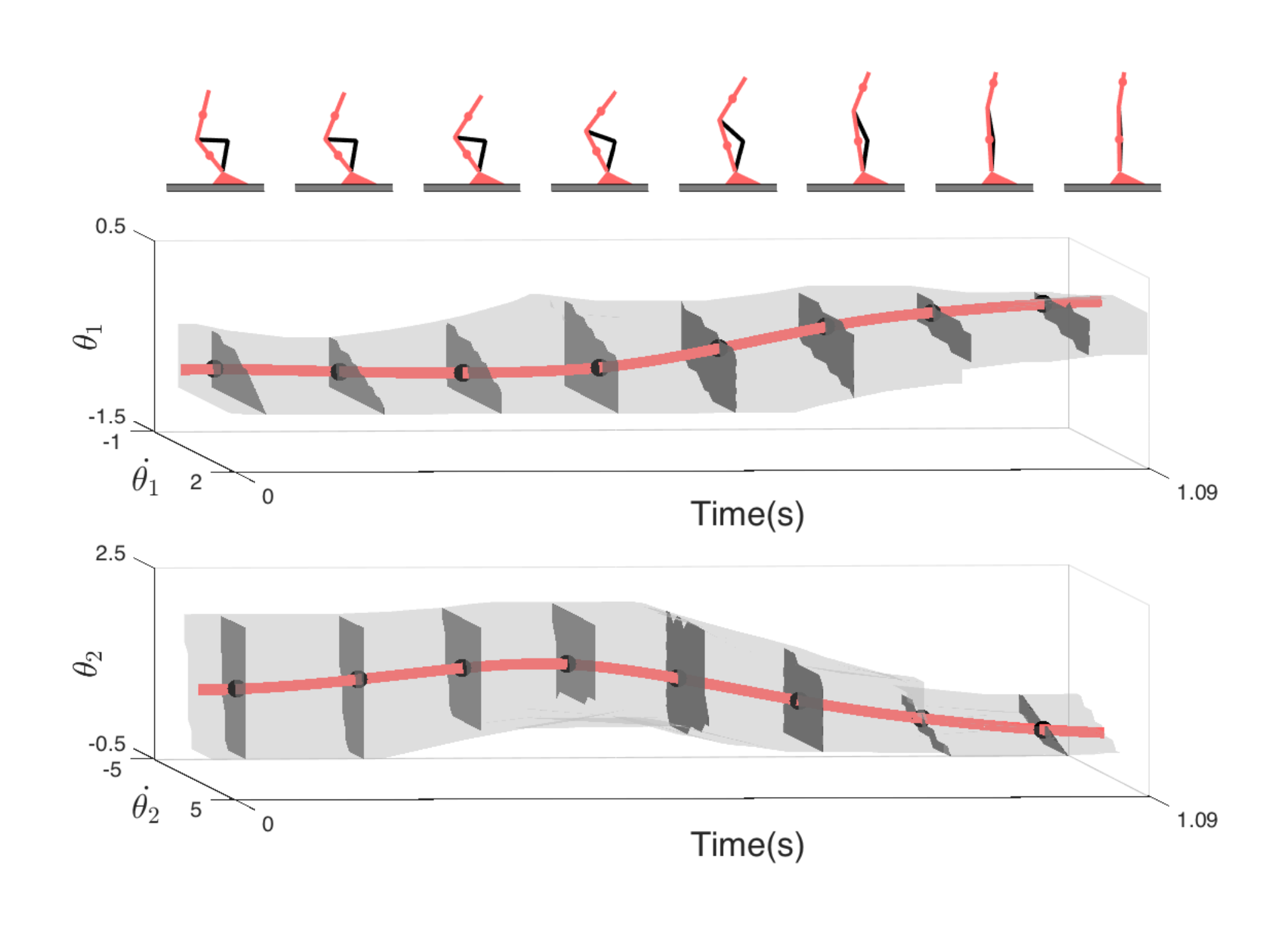}
    \label{fig:bos_dpm_b}
  \end{subfigure}
    \caption{The computed BOS for the DPM for untrained STS motions of subject ID 7. The top panels illustrate the STS motion through time, with the coloured skeleton depicting the DPM representation of the motion. The center and bottom panels represent the BOS for each joint of the DPM in the state space of the model. Thick coloured lines in each subplot represent the observed trajectory ($x_\text{obs}$) of the STS motion. The gray region represents the computed BOS, with time slices in dark gray corresponding to the position of the model above.  Time is on the X-axis (scaled in each subplot), angle from the joint (1 or 2) to the limb COM ($m_1$ or $m_2$) on the Y-axis, and corresponding joint angular velocity on the Z-axis.}
  \label{fig:bos_dpm_young}
\end{figure*}

\begin{figure*}[h]
  \begin{subfigure}[b]{0.48\textwidth}
    \centering
    \caption{Older subject - Quasi-static}
    \includegraphics[width=\textwidth,trim={1cm 0.5cm 0.75cm 1cm},clip]{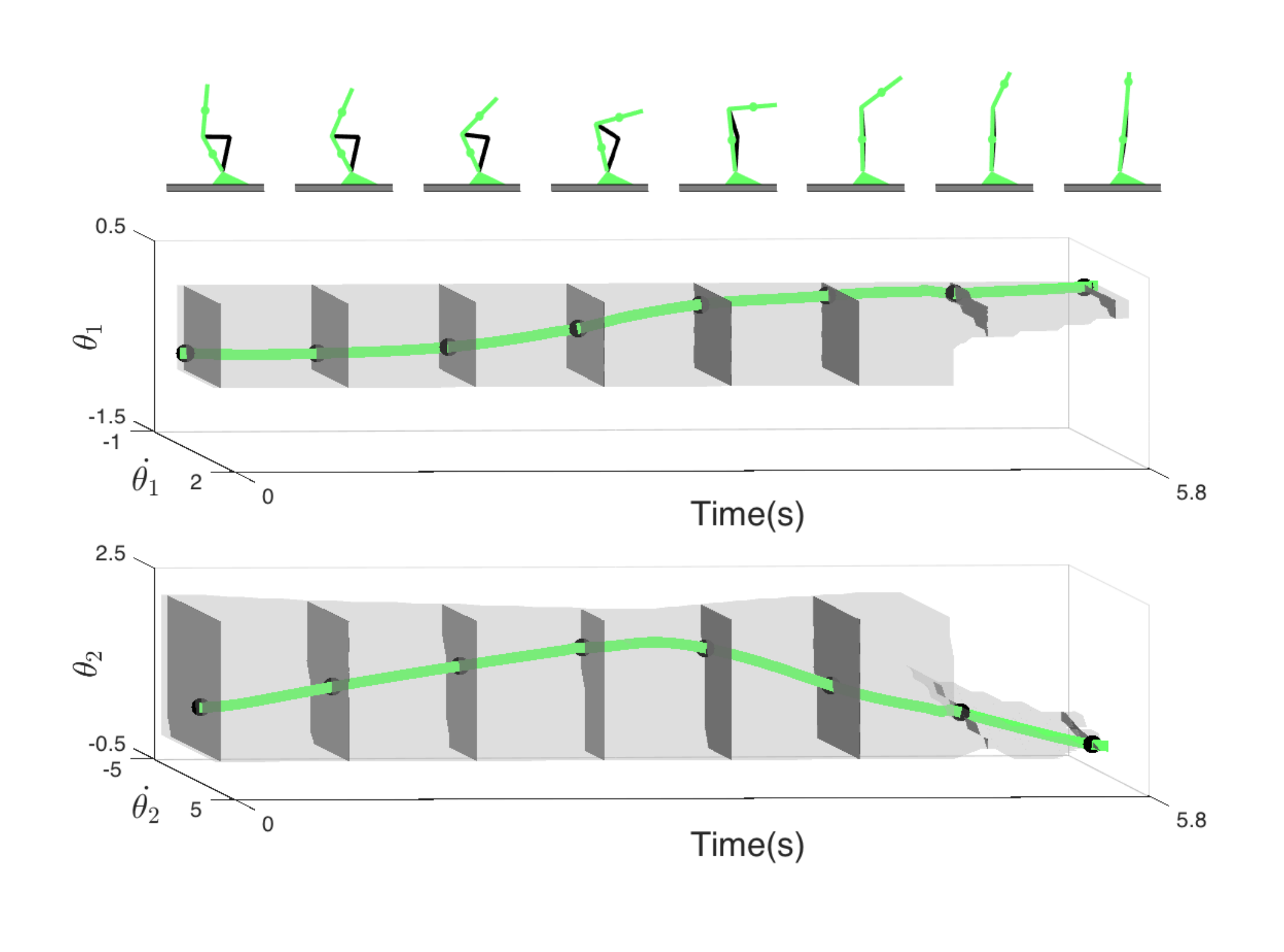}
    \label{fig:bos_dpm_g}
  \end{subfigure}
  \begin{subfigure}[b]{0.48\textwidth}
    \centering
    \caption{Older subject - Momentum Transfer}
    \includegraphics[width=\textwidth,trim={1cm 0.5cm 0.75cm 1cm},clip]{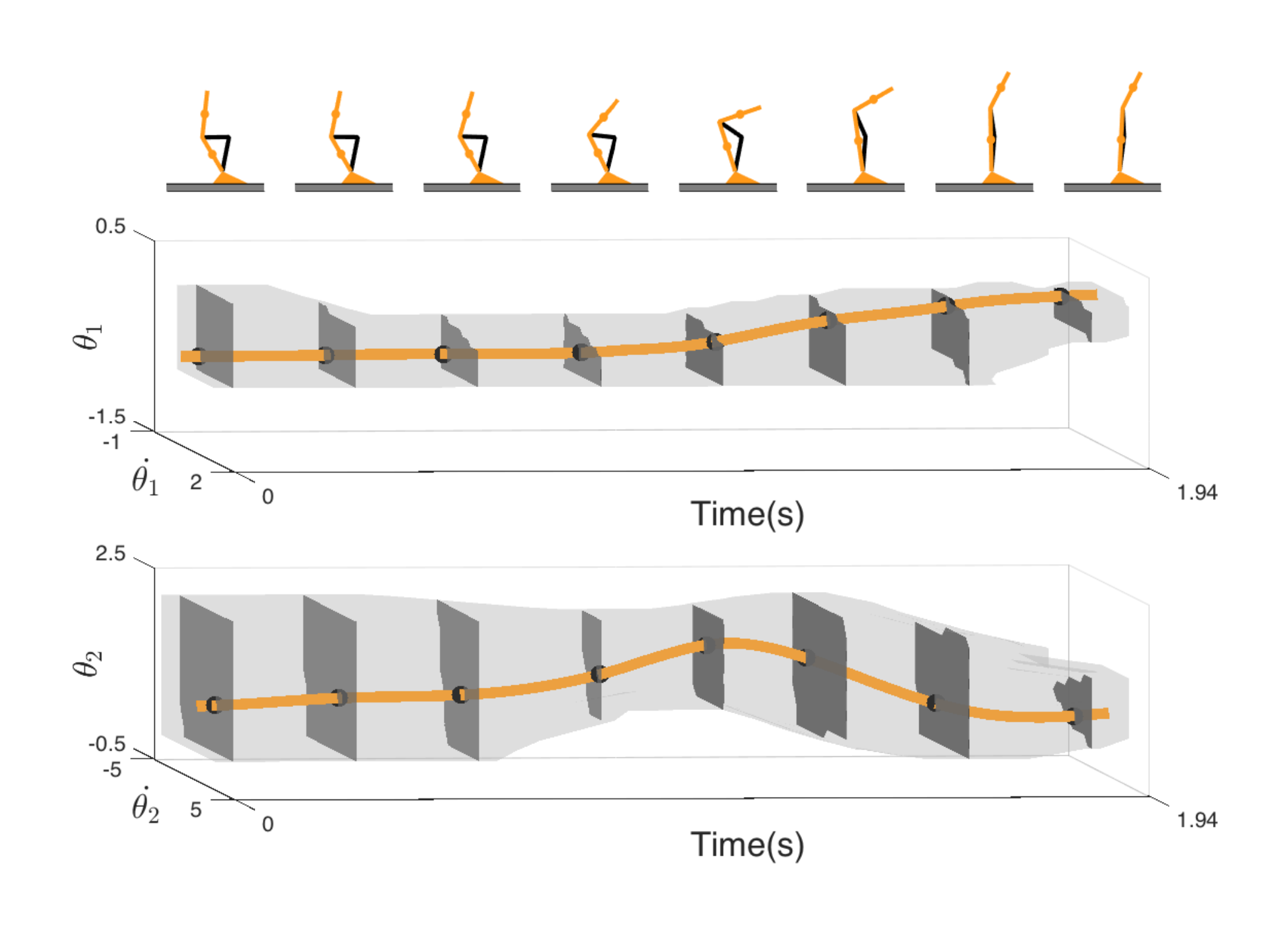}
    \label{fig:bos_dpm_h}
  \end{subfigure}
  \caption{ The computed BOS for the DPM for trained STS motions of subject ID 11. The top panels illustrate the STS motion through time, with the coloured skeleton depicting the DPM representation of the motion. The center and bottom panels represent the BOS for each joint of the DPM in the state space of the model. Thick coloured lines in each subplot represent the observed trajectory ($x_\text{obs}$) of the STS motion. The gray region represents the computed BOS, with time slices in dark gray corresponding to the position of the model above.  Time is on the X-axis (scaled in each subplot), angle from the joint (1 or 2) to the limb COM ($m_1$ or $m_2$) on the Y-axis, and corresponding joint angular velocity on the Z-axis.}
  \label{fig:bos_dpm_older}
\end{figure*}

\begin{table}[h]
  \centering
  \begin{tabular}{c|c|cccc}
  \toprule
    \multirow{2}{*}{Group} & \multirow{2}{*}{ID} & \multicolumn{2}{c}{Untrained Motion} & \multicolumn{2}{c}{Trained STS Strategy} \\ \cline{3-6}
    & & Slow & Fast & Quasi-static & Momentum \\ \hline
    \multirow{10}{*}{Young} 
& 1 & 16.9 & 11.4 & 20.8 & 11.1  \\ \cline{2-6} 
& 2 & 21.1 & 11.0 & 30.6 & 17.0  \\ \cline{2-6}  
& 4 & 13.1 & 10.2 & 14.6 & 10.8  \\ \cline{2-6}  
& 5 & 12.9 & 8.2 & 15.2 & 8.4  \\ \cline{2-6}  
& 6 & 17.8 & 16.3 & 22.8 & 13.2  \\ \cline{2-6}  
& \textbf{7} & \textbf{18.4} & \textbf{12.2} & 19.5 & 12.6  \\ \cline{2-6}  
& 8 & 13.5 & 7.5 & 13.8 & 7.0  \\ \cline{2-6}  
& 9 & 27.6 & 17.4 & 26.3 & 19.3  \\ \cline{2-6} 
& 10 & 14.8 & 9.3 & 17.0 & 11.3  \\ \cline{2-6} 
& 14 & 23.0 & 14.7 & 24.6 & 11.2  \\ \cline{1-6} 
\multirow{5}{*}{Older} 
& 3 & 36.5 & 30.6 & 32.7 & 29.8  \\ \cline{2-6}  
& \textbf{11} & 28.1 & 20.2 & \textbf{27.6} & \textbf{22.1}  \\ \cline{2-6} 
& 12 & 8.5 & 5.9 & 9.8 & 6.7  \\ \cline{2-6}  
& 13 & 18.0 & 15.4 & 26.3 & 14.8  \\ \cline{2-6}  
& 15 & 24.1 & 12.0 & 19.6 & 11.3  \\ \cline{2-6}  
    \bottomrule
  \end{tabular}
  \caption{Median volume of the BOS for each STS strategy calculated using the DPM, as a percentage of the domain with bounds. The individuals and strategies illustrated in Fig. \ref{fig:fujimoto} are shown in bold.}
  \label{tab:dpm_metrics}
\end{table}

The BOS for subjects ID 7 and 11 are illustrated in Figures \ref{fig:bos_dpm_young} and \ref{fig:bos_dpm_older}, respectively.
Again, note for subject ID 7, the BOS for the slow STS is larger than the fast STS, and for subject ID 11, the BOS for quasi-static STS is larger than the momentum-transfer STS. Much like the IPM, the DPM also succeeds in cases when ROSv fails.
The BOS holds a rectangular shape at certain times due to the $\theta_1$ and $\theta_2$ bounds and indicating that the subject is ``maximally'' stable with respect to those states.

\subsection{Summarizing Performance}

As shown in Table \ref{tab:metric_summary}, the BOS computation method presented above correctly identifies the STS strategies with greater stability with a higher accuracy than the ROSv method. 
Furthermore, by increasing model complexity from the IPM to DPM, the volume of the computed BOS tends to decrease, suggesting that higher order models may yield tighter BOSs about the trajectory.

\begin{table}[h]
  \centering
  \begin{tabular}{c|c|cccc}
  \toprule
    \multirow{2}{*}{Method} & \multirow{2}{*}{Type} & \multicolumn{2}{c}{Slow $>$ Fast} & \multicolumn{2}{c}{Static $>$ Momentum} \\ \cline{3-6}
    & & Young & Older & Young & Older \\ \hline
    Fujimoto & ROSv & 6/10 & 4/5 & 9/10 & 4/5 \\ \hline 
    \multirow{2}{*}{Proposed} & IPM & 10/10 & 5/5 & 10/10 & 5/5 \\ \cline{2-6} 
    & DPM & 10/10 & 5/5 & 10/10 & 5/5 \\
    \bottomrule
  \end{tabular}
  \caption{The number of individuals for which each method correctly identified the more stable standing strategy (Slow and Static).  The median value for ROSv and the BOS volume across 5 trials for each STS motion was used for the comparison.}
  \label{tab:metric_summary}
\end{table}
\section{Discussion and Conclusion}\label{sec:discussion}

This paper presents the first personalised computational framework to model, identify, and analyse the stability of an individual's STS motion by using kinematic observations to compute the BOS.
Rather than reducing the STS motion to a single feature, the entire trajectory is analysed via subject-specific models and motion-specific trajectories to provide a more informative metric of stability.
Where ROSv method fails, our proposed method successfully identifies slow and quasi-static as more stable than fast and momentum-transfer standing strategies. The shape of the computed BOS reveals how stability changes throughout the STS motion, aiding in the identification of unstable manoeuvres. 

This framework provides a clinical tool to aid physical therapists in identifying and reducing locomotor instability. Using numerical tools to compute the BOS of locomotion obviates the need to perform extensive perturbation experiments, making this approach applicable to injured or high-risk individuals. Double blind tests comparing the motion of healthy, injured, and fall-prone individuals will help establish the diagnostic benefit of this automated computation of stability. Because the shape of the BOS characterises the perturbations most detrimental to an individual, this method can direct the customization of physical therapy regimens to improve stability. Furthermore, implementing our computational framework into longitudinal studies will improve our understanding of the effect aging, injury, and clinical intervention have on locomotor stability and quality of life.
Fortunately, the speed of this computation method makes it feasible to quickly collect enough trials to experimentally validate these clinical applications with adequate statistical power. 

There is an unavoidable tradeoff between computation speed, accuracy and dimensionality. For systems with few states (i.e. IPM), it is possible to simulate the volume of the BOS via direct simulation and circumvent the optimization problem defined in Section \ref{sec:om}. However,
direct simulation suffers from exponential scaling in the number of states. 
For example, for the DPM, a 2-link pendulum, a sparse simulation of the BOS consists of 175k randomly sampled points requires over 8 hours to compute.
To simulate the BOS for a more representative human model such as a 3-link pendulum or higher would take days or weeks, which is not practically feasible for widespread deployment.
However, equally as important to the BOS is the identification of control strategy used by humans for different actions.

This framework can be expanded to evaluate the stability of a variety of human behaviours, thereby enabling the study of human motion from a control-theoretic point of view. 
Observations of athletes attempting to regain balance suggest that swinging appendages can contribute to overall stability. 
Although it has been demonstrated that the angular momentum of swinging appendages can affect body rotation~\cite{Libby2012}, the evaluation of control strategies exploiting swinging appendages has not occurred.

The proposed method for computing the BOS enables automation, widespread deployment, and customization of motion analysis. 
This framework aids experimental design and analysis of a variety of motions, enhancing the study of individual differences in musculoskeletal architecture and motor control strategies.
The methods presented here can be used to improve the identification of individuals at risk for falling and to develop targeted therapy to increase stability, thereby helping individuals maintain mobility and quality of life.

\section{Acknowledgements}
This work was supported by ONR MURI N00014-13-1-0341.

\section{Competing Interests}
The authors have no competing interests.

\section{Author Contributions}
VS formulated the framework and implementation, conducted human experiments and analysis, and drafted the manuscript; TM assisted in data analysis and helped revise the manuscript for important intellectual content; RB helped conceive the study and draft the manuscript; RV aided in the development of the framework, constructed the tools for stability analysis, and helped draft the manuscript. All authors gave final approval for publication.

\bibliographystyle{ShiaSTSBOS}
{\footnotesize
\bibliography{references}}

\end{document}